\title{Two-Level Rectilinear Steiner Trees}
\author{Stephan Held\thanks{Research Institute for Discrete Mathematics,
        University of Bonn, \newline {\tt held@or.uni-bonn.de, kaemmerling@or.uni-bonn.de}.}
        \and
	Nicolas K\"ammerling\footnotemark[1]}
\date{~\vspace{-1cm}}        
\newcommand{\Rset}{ {\mathbb{R}} }
\newcommand{\NP}{\mbox{\slshape NP}}
\newtheorem{theorem}{Theorem}[section]
\newtheorem{lemma}{Lemma}[section]
\newtheorem{cor}{Corollary}[section]
\def\mnorm#1{\|#1\|_1}
\definecolor{darkGreen}{rgb}{0,0.5,0}
\begin{document}
\thispagestyle{empty}
\maketitle
                                                                                                                                                                                                                                                                                                                                                                                                                                                                                                                                                                                                                                                                                                                                                                                                        
\begin{abstract}
\noindent  \textbf{Abstract}

  Given a set $P$ of terminals in the plane and a partition of $P$
  into $k$ subsets $P_1, \ldots, P_k$, a two-level rectilinear Steiner
  tree consists of a rectilinear Steiner tree $T_i$ connecting the
  terminals in each set $P_i$ ($i=1,\dots,k$) and a top-level
  tree $T_{top}$ connecting the trees $T_1, \ldots, T_k$. The goal is
  to minimize the total length of all trees. This problem arises
  naturally in the design of low-power physical implementations
  of parity functions on a computer chip.
  
  For bounded $k$ we present a polynomial time approximation scheme
  (PTAS) that is based on Arora's PTAS for rectilinear Steiner trees
  after lifting each partition into an extra dimension.
  
  For the general case we propose an algorithm that predetermines a
  connection point for each $T_i$ and $T_{top}$ ($i=1,\ldots,k$).
  Then, we  apply any approximation algorithm for minimum
  rectilinear Steiner trees in the plane to compute each $T_i$ and
  $T_{top}$ independently.
  
  This gives us a $2.37$-factor approximation with a running time of
  $\mathcal{O}(|P|\log|P|)$ suitable for fast practical computations.
  The approximation factor reduces to $1.63$ by applying Arora's
  approximation scheme in the plane.
\end{abstract}

\section{Introduction}
We consider the {\it two-level rectilinear Steiner tree problem}
(R2STP) that arises from an application in VLSI design.  Consider the
computation of a parity function of $k$ input bits using 2-input
XOR-gates. 
Due to the symmetry, associativity, and commutativity of the XOR
function, this can be realized by an arbitrary binary tree with $k$
leaves, rooted at the output, by inserting an XOR-gate at every
internal vertex \cite{xiang-etal:2010}.
Throughout this paper we consider the parity function as a placeholder
for any fan-in function of the type $x_1\circ x_2\circ\dots\circ x_k$, where
$\circ$ is a symmetric, associative, and commutative 2-input operator,
i.\ e.\  $\circ \in \{\oplus, \vee, \wedge\}$.

On a chip such a tree has to be embedded into the plane and all
connections must be realized by rectilinear segments.  If each input
and the output are single points on the chip, a realization of minimum
length and thus power consumption is given by a minimum length rectilinear Steiner
tree. This is a tree connecting the inputs and 
the output by horizontal and vertical line segments using
additional so-called Steiner vertices to achieve a shorter length than
a minimum spanning tree.  At each Steiner vertex of degree three an
XOR-gate is placed. Higher degree vertices can be dissolved into
degree three vertices sharing their position.
Figure~\ref{fig:motivating example} shows an example of an embedded
parity function on the left.

\begin{figure} [t]
\begin{minipage}[c]{0.49\textwidth}
\centering
\begin{tikzpicture}[scale=0.33, set style={{help lines}+=[white]}
  ]
  \draw[style=help lines] (-1, -1) grid +(16, 7);

  \node (R) at (15, 3) {};

  \begin{scope}[shape=rectangle,inner sep=1pt, minimum size=0.42cm, fill=darkGreen!65]
   \tikzstyle{every node}=[fill]
   \node (NR1) at ( 0  , 6) {$p_1$};
  \end{scope}
  \begin{scope}[shape=rectangle,inner sep=1pt, minimum size=0.42cm, fill=orange!65]
   \tikzstyle{every node}=[fill]
   \node (NR4) at ( 0 ,  0) {$p_2$};
  \end{scope}

  \node[scale=0.7,draw,fill=red,xor gate US] (xor) at (0.9, 3) {};

  \draw[color=red, line width=1pt]  (R) |- (xor.output);
  \draw[color=red, line width=1pt]  (xor.input 1) -| (NR1);
  \draw[color=red, line width=1pt]  (xor.input 2) -| (NR4);

  \begin{scope}[shape=rectangle,inner sep=1pt, minimum size=0.42cm, fill=blue!33]
   \tikzstyle{every node}=[fill]
   \node (S1) at ( 15  ,  3) {$p_3$};
  \end{scope}
\end{tikzpicture}
\end{minipage}
\hfill
\begin{minipage}[c]{0.49\textwidth}
\centering
\begin{tikzpicture}[scale=0.33, set style={{help lines}+=[white]}]
  \draw[style=help lines] (-1, -1) grid +(16, 7);

  \node (R) at (15, 3) {};
  \node (R1) at (12.3, 6.35) {};
  \node (R2) at (12.3, -0.38) {};

  \begin{scope}[shape=rectangle,inner sep=1pt, minimum size=0.42cm, fill=darkGreen!65]
   \tikzstyle{every node}=[fill]
   \node (NR1) at ( 0  , 6) {$p_1$};
  \end{scope}
  \begin{scope}[shape=rectangle,inner sep=1pt, minimum size=0.42cm, fill=orange!65]
   \tikzstyle{every node}=[fill]
   \node (NR4) at ( 0 ,  0) {$p_2$};
  \end{scope}
  \begin{scope}[shape=rectangle,inner sep=1pt, minimum size=0.42cm, fill=darkGreen!65]
   \tikzstyle{every node}=[fill]
   \node (N13) at ( 15, 6) {$p_1'$};
  \end{scope}
  \begin{scope}[shape=rectangle,inner sep=1pt, minimum size=0.42cm, fill=orange!65]
   \tikzstyle{every node}=[fill]
   \node (N42) at ( 15, 0) {$p_2'$};
  \end{scope}

  \node[scale=0.7,draw,fill=red,xor gate US] (xor) at (13.3, 3) {};

  \draw[color=red, line width=1pt]  (R) |- (xor.output);
  \draw[color=red, line width=1pt]  (xor.input 1) -| (R1);
  \draw[color=red, line width=1pt]  (xor.input 2) -| (R2);

  \begin{scope}[shape=rectangle,inner sep=1pt, minimum size=0.42cm, fill=blue!33]
   \tikzstyle{every node}=[fill]
   \node (S1) at ( 15  ,  3) {$p_3$};
  \end{scope}

  \draw[line width=1pt, color=darkGreen] (NR1) |- (N13) ;
  \draw[line width=1pt, color=orange] (NR4) |- (N42) ;

\end{tikzpicture}
\end{minipage}
\caption{On the left, we have two inputs $p_1$ and $p_2$ and a single output $p_3$. 
The XOR-gate should be placed at the median of the three terminals.
If the inputs have the side outputs  $p_1'$ and $p_2'$, the XOR-gate 
should be placed at $p_3$, saving  the horizontal length.}
\label{fig:motivating example}
\end{figure}
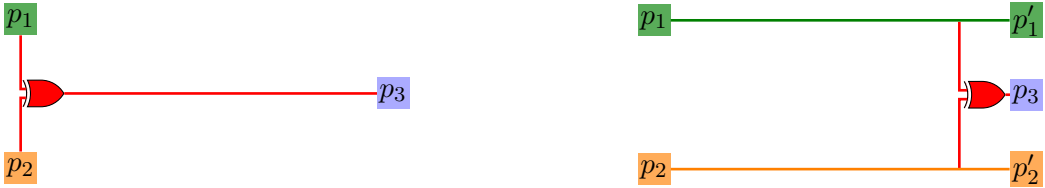

In practice input signals may be needed for other computations on the
chip and thus delivered to other side outputs. Similarly, the result
may have to be delivered to multiple output terminals.  Thus, each
input and its successors and the output terminals must be connected by
separate Steiner trees as well. These trees are then connected by a top-level
Steiner tree into which the XOR-gates will be inserted.  Considering
the additional terminals allows to construct a potentially shorter
top-level and  two-level Steiner tree  as shown in Figure~\ref{fig:motivating example} on the
right. Algorithms ignoring the side outputs cannot guarantee an approximation
factor better than two, as we will see in Section~\ref{sec:simple-bottom-up}.

This motivates the definition of the {\it minimum two-level rectilinear Steiner
  tree problem}, where we are given a set $P\subset \Rset^2$ of $n$ terminals and a partition of $P$ into $k$ subsets $P_1, \ldots, P_k$.

A {\it two-level rectilinear Steiner tree} $T = (T_{top}, T_1, \ldots,
T_k)$ consists of a Steiner tree $T_i$ for each $i \in \{1,\dots,k\}$
connecting the terminals in $P_i$ and a (group) Steiner tree $T_{top}$ 
connecting the embedded trees $\{T_1, \ldots, T_k\}$. 
We call $T_{top}$ the {\it top-level tree}. 
Note that all trees are allowed to cross.
The objective  is to
minimize the total length of all trees
  $$l(T):= \sum_{i=1}^{k}l(T_i) +l(T_{top}),$$
where $l(T'):= \sum_{\{x,y\}\in E(T')} \mnorm{x-y}$ is the
$\ell_1$-length of a Steiner tree $T'$.
%

For each $i\in \{1,\dots,k\}$  the top-level tree and $T_i$ intersect in at least one point.
We can select one such point $q_i\in T_{top}\cap T_i$ and  call it {\it connection point} for $T_i$ and $T_{top}$.
Then $T_{top}$ is a Steiner tree for the terminals $\{q_1,\dots,q_k\}$ and each $T_i$ is a Steiner
tree for $P_i\cup \{q_i\}$.

Obviously, this problem is \NP-hard as it contains the minimum
rectilinear Steiner tree problem in two ways: if $k=1$ or if $|P_i|=1$
for $i\in \{1,\dots,k\}$.  The problem is  \NP-complete, because
there is always an optimum solution in the Hanan grid \cite{hanan} of
$P$. This simple fact will arise later as a side-result in
 Corollaries~\ref{cor:contained-in-Hanan-grid} and \ref{cor:contained-in-NP}.

Designing the top-level tree as a stand-alone problem is hard.  If
all subtrees $T_i$ $(i\in \{1,\dots,k\})$ are fixed, $T_{top}$ cannot
be approximated to arbitrary quality, as the group Steiner tree
problem for connected groups in the Euclidean plane cannot be approximated within a factor of
$(2-\epsilon)$ \cite{safra}.
However we are in a more lucky situation as we can tradeoff the lengths
of bottom-level and top-level trees.

To the best of our knowledge the two-level rectilinear Steiner tree
problem has not been considered before despite its practical importance \cite{xiang-etal:2010,xiang-etal:2013}.
It is loosely related to the
hierarchical network design problems \cite{alvarez-etal:2014,
  balakrishnan:1991b,current}
or multi-level facility location problems \cite{baiou,byrka}.
However, those problems are structurally different,
typically  considering problems in graphs, and do not apply to our case.

In \cite{xiang-etal:2010}, ordinary rectilinear Steiner trees were
used to build power efficient fan-in trees, when each input and
the output consists of a single terminal.
In practice designers are also interested in the depth of the constructed
circuit  \cite{xiang-etal:2013}. However, for finding good
power versus depth tradeoffs a better understanding of short solutions
is an essential prerequisite and the aim of our work.

\subsection{Our Contribution}

In Section~\ref{sec:simple-bottom-up} we show even for arbitary metrics that the na{\"i}ve approach
of picking a random terminal from each partition as a connection
point to the top-level tree and building the bottom-level trees and
top-level as separate instances gives a $2\alpha$-factor
approximation, where $\alpha$ is the approximation factor of the used
minimum Steiner tree algorithm.

Then in Section~\ref{sec:arora} and~\ref{sec:predetermined-connection-points} 
we focus on rectilinear instances. In Section~\ref{sec:arora} we show how to lift our instance into
an equivalent $(2+k)$-dimensional rectilinear Steiner tree instance.  If
the number $k$ of partitions is bounded by a constant, we obtain a
PTAS by applying Arora's PTAS for rectilinear Steiner trees
\cite{arora}.

As our main result we improve the approximation guarantee for unbounded $k$ 
from $(2+\epsilon)$ to $1.63$  in Section~\ref{sec:predetermined-connection-points}.
Using spanning tree heuristics this approach turns also into a fast practical algorithm with running time  $\mathcal{O}(n\log n)$ 
and approximation factor $2.37$.

\section{Simple Bottom-Up Construction} 
\label{sec:simple-bottom-up}
A simple bottom-up approach, which works for any metric space, is to compute a Steiner tree $T_i$ for $P_i$
($i=1,\dots, k$).  In each $T_i$ we fix a connection point $q_i \in P_i$
arbitrarly, compute a Steiner tree $T_{top}$ for $\{q_1,
\ldots, q_k\}$, and return $T = (T_{top}, T_1, \ldots, T_k)$.

\begin{theorem}
The simple bottom-up approach is a $2\alpha$-factor approximation algorithm for the minimum two-level Steiner tree problem, if we use
an $\alpha$-factor approximation algorithm for the minimum Steiner tree problem as a subroutine. 
\end{theorem}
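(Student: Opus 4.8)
The plan is to fix an optimum solution $T^* = (T^*_{top}, T^*_1, \ldots, T^*_k)$ with value $\mathrm{OPT} = l(T^*)$ and to charge the length of our computed tree $T = (T_{top}, T_1, \ldots, T_k)$ separately against the bottom-level part $\sum_i l(T^*_i)$ and against the whole of $\mathrm{OPT}$. Since $\mathrm{OPT} = \sum_{i=1}^k l(T^*_i) + l(T^*_{top})$, both $\sum_i l(T^*_i) \le \mathrm{OPT}$ and $l(T^*_{top}) \le \mathrm{OPT}$ hold trivially; the factor $2$ will arise because we bound the top-level cost of our solution against all of $\mathrm{OPT}$ rather than merely against $l(T^*_{top})$.

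First I would handle the bottom-level trees. In the optimum, each $T^*_i$ is a Steiner tree for $P_i$ together with its connection point, hence in particular a connected network spanning $P_i$, so its length is at least that of a minimum rectilinear Steiner tree for $P_i$. Our algorithm applies the $\alpha$-approximation to exactly the instance $P_i$, which gives $l(T_i) \le \alpha\, l(T^*_i)$ for every $i$. Summing yields $\sum_{i=1}^k l(T_i) \le \alpha \sum_{i=1}^k l(T^*_i) \le \alpha\,\mathrm{OPT}$.

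The crucial step is bounding the top-level tree. Our $T_{top}$ is an $\alpha$-approximate Steiner tree for the arbitrarily chosen connection points $\{q_1, \ldots, q_k\}$ with $q_i \in P_i$, so it suffices to bound the length of a \emph{minimum} Steiner tree for $\{q_1, \ldots, q_k\}$. Here I would observe that the union $T^*_{top} \cup T^*_1 \cup \cdots \cup T^*_k$ is connected and, since $q_i \in P_i \subseteq T^*_i$, contains every point $q_1, \ldots, q_k$. Thus this union is itself a connected network spanning $\{q_1, \ldots, q_k\}$ of total length at most $\mathrm{OPT}$; consequently the minimum Steiner tree for $\{q_1, \ldots, q_k\}$ has length at most $\mathrm{OPT}$, giving $l(T_{top}) \le \alpha\,\mathrm{OPT}$.

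Combining the two bounds yields $l(T) = \sum_{i=1}^k l(T_i) + l(T_{top}) \le 2\alpha\,\mathrm{OPT}$, as claimed. The only real obstacle is the top-level estimate: because the connection points are fixed without any knowledge of $T^*$, one cannot compare $T_{top}$ against $T^*_{top}$ directly---indeed group Steiner trees on fixed groups are hard to approximate, as noted earlier. The idea that rescues the argument is to use the entire optimal two-level tree as a witness network connecting the $q_i$, which costs the extra factor of one (hence $2\alpha$ rather than $\alpha$) but requires no structural assumption on where the $q_i$ lie.
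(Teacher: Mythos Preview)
Your proof is correct and follows essentially the same approach as the paper: bound each $l(T_i)$ by $\alpha\, l(T^\star_i)$, and bound $l(T_{top})$ by $\alpha$ times the length of a witness network for $\{q_1,\dots,q_k\}$ built from the optimum. The only cosmetic difference is that the paper's witness is $T^\star_{top}$ together with straight segments $\{q^\star_i,q_i\}$ (using $\mathrm{dist}(q^\star_i,q_i)\le l(T^\star_i)$), whereas you use the whole union $T^\star_{top}\cup T^\star_1\cup\cdots\cup T^\star_k$; both witnesses have length at most $\mathrm{OPT}$ and yield the same bound.
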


\begin{proof}
Let $T$ be the two level Steiner tree computed by the simple bottom-up approach
and let $T^{\star} = (T^{\star}_{top}, T^{\star}_1, \ldots, T^{\star}_k)$ be a minimum two-level Steiner tree. 
%
Let be $q^{\star}_i \in T^{\star}_{top} \cap T^{\star}_i$ the 
connection point of the optimum two-level Steiner tree. 
Since $T^{\star}_i$ is a Steiner tree on $\{q^{\star}_i\} \cup P_i$, we have $dist(q^{\star}_i, q_i) \le l(T^{\star}_i)$.
Thus,
\begin{align*} 
l(T) & = l(T_{top}) + \sum_{i = 1}^{k} l(T_i) ~\le~ \alpha \cdot l(T^{\star}_{top}) + \alpha \sum_{i = 1}^{k} dist(q^{\star}_i, q_i) + \sum_{i = 1}^{k} \alpha \cdot l(T^{\star}_i) \\ 
&\le \alpha \cdot l(T^{\star}_{top}) + 2 \alpha \sum_{i = 1}^{k} l(T^{\star}_i) 
~\le~ 2 \alpha \cdot l(T^{\star}).
\end{align*}
The first inequality follows, since $E(T^{\star}_{top}) \cup \left( \bigcup_{i = 1}^{k} \{q^{\star}_i,q_i\} \right)$ covers a Steiner tree on $\{q_1, \ldots, q_k\}$.
\end{proof}

\begin{figure} [t]
\label{fig:simple}
\begin{minipage}[c]{0.49\textwidth}
\centering
\fbox{
\begin{tikzpicture}[line cap=round,line join=round,x=0.6cm,y=0.3cm]
\clip(-3,-1.5) rectangle (3,1.5);
\fill [color=blue] (2,0) circle (2.5pt);
\fill [color=darkGreen] (-2,0) circle (2.5pt);
\fill [color=blue] (0,0) circle (2.5pt);
\draw [line width=1.2pt,color=blue] (0,0) -- (2,0);
\draw [line width=1.2pt,color=darkGreen] (0,0) -- (-2,0);
\node [color=blue] at (1,0.8) {$T_1$};
\node [color=darkGreen] at (-1,0.8) {$T_2$};
\node [color=red] at (-0.1,-0.9) {$T_{top}$};
\draw [color=red,line width=0.5pt] (0,0) circle (3pt);
\fill [color=blue] (0,0) circle (2.5pt);
\begin{scope}
\clip (0, 0) circle (2.5pt);
\fill[color=darkGreen] (-1, -1) rectangle (0,1);
\end{scope}
\end{tikzpicture}
}
\end{minipage}
\hfill
\begin{minipage}[c]{0.49\textwidth}
\centering
\fbox{
\begin{tikzpicture}[line cap=round,line join=round,x=0.6cm,y=0.3cm]
\clip(-3,-1.5) rectangle (3,1.5);
\draw [line width=1.2pt,color=red] (-2,-0.07) -- (2,-0.07);
\fill [color=blue] (2,0) circle (2.5pt);
\fill [color=darkGreen] (-2,0) circle (2.5pt);
\draw [line width=1.2pt,color=blue] (0,0.07) -- (2,0.07);
\draw [line width=1.2pt,color=darkGreen] (0,0.07) -- (-2,0.07);
\node [color=blue] at (1,0.8) {$T_1$};
\node [color=darkGreen] at (-1,0.8) {$T_2$};
\node [color=red] at (0.75,-0.9) {$T_{top}$};
\fill [color=darkGreen!50!blue] (0,0) circle (2.5pt);
\draw [color=red,line width=0.5pt] (2,0) circle (3pt);
\draw [color=red,line width=0.5pt] (-2,0) circle (3pt);
\node [color=red] at (2.3,0.5) {$q_1$};
\node [color=red] at (-2.3,0.5) {$q_2$};
\fill [color=blue] (0,0) circle (2.5pt);
\begin{scope}
\clip (0, 0) circle (2.5pt);
\fill[color=darkGreen] (-1, -1) rectangle (0,1);
\end{scope}

\end{tikzpicture}
}
\end{minipage}
\caption{A tight example when choosing connection points as arbitary points of $P_i$.} 
\end{figure}
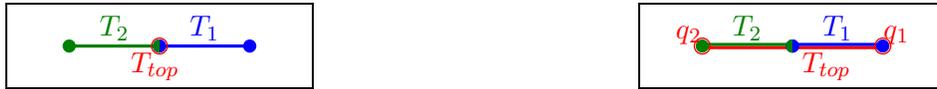

Figure~\ref{fig:simple} shows that the factor $(2+\epsilon)$ is sharp.
For the instance $P_1 = \{ (0,0), (1,0) \}$, $P_2 = \{ (0,0), (-1,0)\}$
a minimum  two-level rectilinear  Steiner tree of length $2$ is shown on the left with $l(T_{top})=0$.
On the right, a bad choice of connection points and minimum Steiner trees $T_{top},T_1$, and $T_3$ yield a total length of $4$.

\section{PTAS for a bounded number of partitions}
\label{sec:arora}
We can reduce the two-level rectilinear Steiner tree problem in the plane to an
ordinary rectilinear Steiner tree problem in a higher dimensional
space, where we can apply Arora's PTAS \cite{arora}.


The idea of the PTAS is to lift every subset $P_1,\dots, P_k$ to an additional dimension.
We assume $k>1$. Otherwise the two-level Steiner tree problem is an ordinary Steiner
tree problem.
Let $P_1,\dots,P_k\subset \Rset^2$ be the subsets of a two-level
Steiner tree instance, we define a Steiner tree instance in
$\Rset^{2+k}$.
The set of terminals $P'$ is comprised as follows.

For each original terminal $x \in P_i \subset \Rset^2$ ($i\in
\{1,\dots,k\}$), we add a terminal $x' := (x,K \cdot e_i) \in
\Rset^{2+k}$, where $e_i\in \Rset^k$ is the unit vector with value one
at the $i$-th coordinate and $K$ is a large constant, e.\ g.\ we could
choose $K$ as  $l(B(P))$.  Now for $x \in P_h$ and $y \in P_i$ the distance of
their high dimensional copies $x',y'\in P'$ is $\lVert x'-y'\rVert_1 =
\lVert x-y\rVert_1 + 2K \lVert e_h-e_i\rVert_1 = \lVert x-y\rVert_1 + 2K
\delta_{h,i}$, where $\delta_{h_i}$ is one if $h = i$ and zero
otherwise.
An example of a  lifted two-level Steiner tree is given in Figure~\ref{fig:lifted-instance}.

\begin{figure}[t] 
\centering 
\begin{tikzpicture}[scale=1]
\pgfmathsetmacro{\cubex}{5}
\pgfmathsetmacro{\cubey}{2.5}
\pgfmathsetmacro{\cubez}{2.5}

\draw [line width=1.2pt,color=red] (-0.3*\cubex,0,-0.6*\cubez) -- (-0.5*\cubex,0,-0.6*\cubez);
\fill [color=red, line width=0.5pt] (-0.3*\cubex,0,-0.6*\cubez)  circle (2.8pt);
\fill [color=red, line width=0.5pt] (-0.5*\cubex,0,-0.6*\cubez)  circle (2.8pt);
\draw [line width=1.2pt,color=red,dash pattern=on 3pt off 3pt] (-0.3*\cubex,0,-0.6*\cubez) -- (-0.3*\cubex,2,-0.6*\cubez);
\draw [line width=1.2pt,color=red,dash pattern=on 3pt off 3pt] (-0.5*\cubex,0,-0.6*\cubez) -- (-0.5*\cubex-3,-2,-0.6*\cubez);

\draw [line width=1.2pt,color=darkGreen] (-0.5*\cubex-3,-2,-0.8*\cubez) -- (-0.5*\cubex-3,-2,-0.3*\cubez) -- (-0.8*\cubex-3,-2,-0.3*\cubez);
\fill [color=darkGreen] (-0.5*\cubex-3,-2,-0.8*\cubez)  circle (2.8pt);
\fill [color=darkGreen] (-0.8*\cubex-3,-2,-0.3*\cubez)  circle (2.8pt);
\node [color=darkGreen] at (-0.6*\cubex-3,-1.8,-0.3*\cubez) { \tiny $T_2$};

\draw [line width=1.2pt,color=blue] (-0.2*\cubex,2,-0.4*\cubez) -- (-0.3*\cubex,2,-0.4*\cubez) -- (-0.3*\cubex,2,-0.8*\cubez);
\fill [color=blue] (-0.2*\cubex,2,-0.4*\cubez)  circle (2.8pt);
\fill [color=blue] (-0.3*\cubex,2,-0.8*\cubez)  circle (2.8pt);
\node [color=blue] at (-0.33*\cubex,1.8,-0.4*\cubez) { \tiny $T_1$};

\fill [color=red, line width=0.5pt] (-0.3*\cubex,2,-0.6*\cubez)  circle (2.8pt);
\fill [color=red, line width=0.5pt] (-0.5*\cubex-3,-2,-0.6*\cubez)  circle (2.8pt);
\node [color=red] at (-0.36*\cubex,-0.2,-0.6*\cubez) {\tiny $T_{top}$};

\draw[black] (0,0,0) -- ++(-\cubex,0,0) -- ++(0,0,-\cubez) -- ++(\cubex,0,0) -- cycle;
\draw[black] (0,2,0) -- ++(-\cubex,0,0) -- ++(0,0,-\cubez) -- ++(\cubex,0,0) -- cycle;
\draw[black] (-3,-2,0) -- ++(-\cubex,0,0) -- ++(0,0,-\cubez) -- ++(\cubex,0,0) -- cycle; 
\node [color=black] at (0.4*\cubex,0,-0.5*\cubez) {\tiny $T_{top}$-space};
\node [color=black] at (0.4*\cubex,2,-0.5*\cubez) {\tiny $T_1$-space};
\node [color=black] at (0.4*\cubex,-2,-0.5*\cubez) {\tiny $T_2$-space};
\end{tikzpicture}

\caption{A flat Steiner Tree in a lifted instance.}
\label{fig:lifted-instance}
\end{figure}
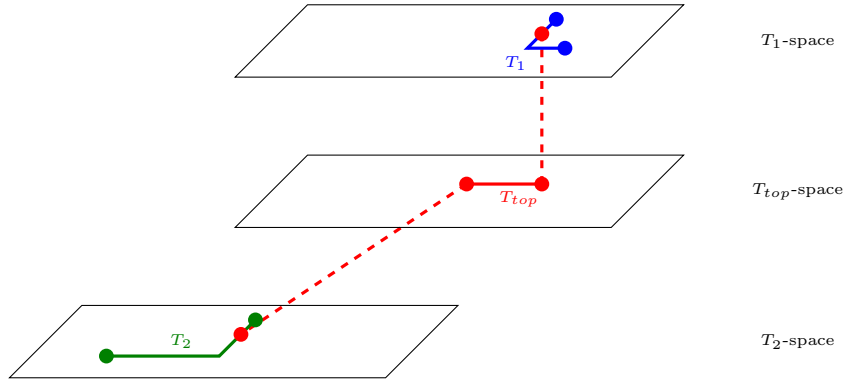

A $(k+2)$-dimensional Steiner tree is called {\it flat} if all Steiner points 
have either the form $(x,0)\in \Rset^{2+k}$ or
$(x,K\cdot e_i)\in \Rset^{2+k}$, where $x\in \Rset^2$ and $e_i\in
\Rset^k$ is a unit vector.
The following Lemma has essentially been proven by Snyder
\cite{snyder}, who shows that an optimum Steiner tree can be found in
the $d$-dimensional Hanan grid \cite{hanan}. We give a short constructive proof for our case.

\begin{lemma} 
  A $(k+2)$-dimensional Steiner Tree $T$ for $P'$ of length $l(T)$ can be transformed in strongly polynomial time into 
  a $(k+2)$-dimensional flat Steiner tree $T'$ of length at most $l(T)$. 
\label{lem:transform-to-flat}
\end{lemma}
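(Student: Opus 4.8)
The plan is to exploit that the $\ell_1$-length decomposes coordinatewise, $l(T)=\sum_{c=1}^{2+k} V_c(T)$ with $V_c(T)=\sum_{\{u,v\}\in E(T)} |u_c-v_c|$, so that the last $k$ (lifted) coordinates can be manipulated almost independently of the two planar ones. I would carry out the transformation in two phases, keeping the combinatorial tree (its vertex set and topology) fixed throughout and only relocating Steiner points; since terminals never move and the topology is preserved, the result always remains a Steiner tree for $P'$. I may also assume the input tree has only $O(n)$ vertices by suppressing degree-two Steiner points, which will matter for the running time.

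Phase A (snapping the lifted coordinates): For each lifted coordinate $c=2+i$ separately, the terminal values are only $0$ and $K$, so minimizing $V_c$ with the topology fixed is a one-dimensional total-variation problem whose optimum can be attained with every Steiner value in $\{0,K\}$. Re-embedding each lifted coordinate in this way is length-non-increasing and leaves the planar coordinates untouched. This is exactly the lifted part of Snyder's Hanan-grid statement, done elementarily. After Phase A every vertex $v$ has a lifted signature in $\{0,K\}^k$; I call $v$ \emph{up in $i$} when its $(2+i)$-th coordinate equals $K$, and write $U_i$ for the set of such vertices, so $U_i\supseteq P_i'$ and each terminal is up in exactly one coordinate.

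Phase B (removing doubly-lifted points): A tree that is already flat is precisely one in which no vertex is up in two coordinates, i.e. $U_i\cap U_j=\emptyset$ for all $i\neq j$. I would drive the potential $\Psi(T):=\sum_{i<j}|U_i\cap U_j|$ to zero by the following move. Pick a pair $i<j$ with $W:=U_i\cap U_j\neq\emptyset$; since terminals are up in only one coordinate, $W$ contains no terminal. Take any connected component $C$ of the forest $T[W]$ and lower all of $C$ simultaneously in coordinates $2+i$ and $2+j$ from $K$ to $0$. Inspecting the effect on $V_{2+i}+V_{2+j}$ edge by edge: an edge inside $C$ is unchanged, an edge from $C$ to a neighbor up in exactly one of $i,j$ keeps length $K$ (the descent merely switches which coordinate is traversed), and an edge from $C$ to a neighbor down in both $i$ and $j$ shrinks by $2K$; no external neighbor can be up in both, as it would then lie in $C$. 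Hence the move never increases length, it strictly decreases $\Psi$ because $C$ leaves $U_i\cap U_j$, and it increases no other $|U_a\cap U_b|$, since lowering only deletes vertices from $U_i$ and $U_j$.

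Because $\Psi$ is a non-negative integer bounded by $\binom{k}{2}\,|V(T)|$ that strictly decreases at each step, while each step (finding a bad pair, extracting a component, relocating it) costs polynomial time, the procedure is strongly polynomial and terminates with $\Psi=0$. Together with Phase A this forces every Steiner point into the form $(x,0)$ or $(x,Ke_i)$, i.e. the tree is flat. The main obstacle I anticipate is the bookkeeping in Phase B: one must verify the neighbor case analysis that guarantees length non-increase and confirm that $\Psi$ really drops without inflating any other overlap; the rest (Phase A and the complexity bound) is routine. A secondary point to get right is that the move must neither disconnect the tree nor displace any terminal, which is precisely why I keep the topology fixed and rely on $W$ being terminal-free.
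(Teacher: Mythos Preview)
Your proposal is correct and shares the same two-phase skeleton as the paper---first force each lifted coordinate into $\{0,K\}$, then eliminate Steiner points that are up in more than one lifted coordinate---but the tools inside each phase differ. For Phase~A the paper makes the one-dimensional optimisation explicit as a minimum $s$--$t$ cut: it splits edges into axis-parallel segments, contracts the forest of non-$j$-directed pieces, and uses Menger's theorem to certify that the cut value lower-bounds the total length in direction~$j$; your total-variation formulation is equivalent (the integral min-cut \emph{is} the $\{0,K\}$-valued TV minimiser) but leaves the algorithm implicit. For Phase~B the paper is considerably simpler than your iterative component-lowering with the potential~$\Psi$: it remaps \emph{every} doubly-lifted Steiner point to lifted signature~$0$ in a single shot, which is safe because any edge from such a point to a neighbour that is up in at most one coordinate already has lifted contribution at least~$K$, and exactly~$K$ is what survives the remap (neighbours up in two or more coordinates are themselves remapped, so those edges only shrink). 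Your argument is fully correct and has the virtue of making the edge-by-edge accounting explicit, whereas the paper compresses the justification of its one-shot step to the phrase ``since we are in the rectilinear case''; conversely, the paper's version shows that no iteration or potential is needed.
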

\begin{proof}
  We assume that all vertices and segments of $T$ are located within
  the bounding box $B(P') = B(P) \times [0,K]^k$ of its terminals. Otherwise
  we could project $T$ into the box without increasing the length.
  Furthermore we assume that all edges in $E(T)$ are one-directional segments after
  introducing vertices of degree two if necessary.

  Fix a dimension $j \in \{3, \ldots, k+2\}$ and call an edge in
  direction $j$ {\it straight}.  
  Let $X$ be the set of straight segments and let $F$ be the forest
  that arises from $T$ after removing all straight segments.
  
  We construct an undirected  graph $G$ as follows. The vertices of $G$ consist of
  the trees in $F$ plus two extra vertices $s$ and $t$.  We add an
  edge between $s$ and each vertex whose underlying tree is located in
  the hyper plane $H_0 := \{x: x_j = 0\}$. Similarly we add an edge
  between the vertices whose tree is located in $H_j:= \{x: x_j = K\}$
  and $t$. Finally, for each straight edge in $X$ we insert an edge
  between the two vertices representing the corresponding trees in
  $F$.

  We assign all edges in $E(G)$ a unit capacity and compute a minimum
  $s$-$t$-cut $\delta(S)$, where $S \subset V(G)$ with $s \in S$. 
  $T^j$ is assembled from $T$
  by projecting all subtrees in $F$ whose corresponding vertex is in $S$
  into $H_0$ and all other trees into $H_j$.  
  To get a valid Steiner tree, we eliminate potentially
  arising cycles by removing edges arbitrarily.
  By the projection, straight edges in the cut $\delta(S)$ will grow to a length of $K$
  and straight edges outside will shorten to length zero  and can be eliminated.

  By Menger's theorem \cite{menger} there are $|\delta(S)|$
  edge-disjoint paths between $s$ and $t$.  Each path represents a set
  straight edges of total length at least $K$. Thus $l(X) \ge  |\delta(S)| \cdot K$ and
  $$ l(T^j) =  |\delta(S)| \cdot K + l (F) \le l(X) + l(F) = l(T).$$
  
  Frist, we apply this transformation to all dimensions $j \in \{3, \ldots
  k+2\}$ to obtain a Steiner tree with Steiner points $x$ of value either $x_j = 0$ or $x_j = K$ in the dimensions $j \in \{3, \ldots
  k+2\}$ without increasing the length of $T$. Second, we remap all Steiner points $x$ with at least two
  coordinates $j,i\in\{3,\dots,k+2\}$ of value $x_i = x_j = K$ to
  $\Rset^2\times \{0\}$ by setting all $x_j$ $j\in\{3,\dots,k+2\}$ to
  zero, we obtain a flat Steiner tree $T'$ of lenght at most $l(T)$, since we are in the rectilinear case.
  
  The running time is dominated by the $k$ minimum cut
  computations that can be done in a digraph, where each edge in
  $E(G)$ is represented by two oppositely directed edges, in strongly
  polynomial time \cite{edmonds+karp:1972}.
\end{proof}

Next we show that we can assume that a tree $T$ for $P'$ has at most one edge
in direction $j\in \{3,\dots,k\}$.

\begin{lemma} 
  If $K \ge l(B(P))$, a
  $(k+2)$-dimensional flat Steiner tree $T$ for $P'$ of length $l(T)$ can be transformed in strongly polynomial time into 
  a $(k+2)$-dimensional flat Steiner tree $T'$ for $P'$ of length at most $l(T)$ so that $T'$ contains at most one edge 
  in each lifting  direction $j \in \{3, \ldots k+2\}$. 
\label{lem:transform-to-two-level}
\end{lemma}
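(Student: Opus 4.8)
The plan is to remove surplus lifting edges one direction at a time. By Lemma~\ref{lem:transform-to-flat} we may assume $T$ is flat and, as in its proof, that $T$ lies inside $B(P)\times[0,K]^k$. Fix a direction $j\in\{3,\dots,k+2\}$ and call an edge of $T$ a \emph{$j$-lifting edge} if it runs in direction $j$; by flatness each such edge joins a point with $x_j=0$ to a point with $x_j=K$ and hence has length exactly $K$. Write $m$ for their number. The goal is to produce a flat tree of length at most $l(T)$ with a single $j$-lifting edge. Every step below only deletes and reinserts $j$-lifting edges and adds horizontal segments, so it leaves the count of $j'$-lifting edges unchanged for every other direction $j'$; processing all directions in turn therefore yields the claimed $T'$.

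For the fixed $j$, delete all $m$ $j$-lifting edges. Since $T$ is a tree this produces exactly $m+1$ components, each with $x_j$ constant. Call these the \emph{upper} components $U_1,\dots,U_a$ (lying in the $T_i$-space $\{x_j=K,\ x_{j'}=0\ \forall j'\neq j\}$, where $i=j-2$) and the \emph{lower} components $L_1,\dots,L_b$ (lying in $\{x_j=0\}$), so $a+b=m+1$. The decisive structural observation is that every lower component meets the $T_{top}$-space $\{x_3=\dots=x_{k+2}=0\}$: a point of $L_r$ sitting in some $T_{i'}$-space with $i'\neq i$ is attached to the rest of $L_r$ only through $i'$-lifting edges, whose lower endpoints lie in the $T_{top}$-space and which were not deleted. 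Consequently the lower components can be joined to one another by horizontal segments inside the $T_{top}$-space, and the upper components by horizontal segments inside the $T_i$-space, all of which preserve flatness.

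I then rebuild $T'$ by adding $a-1$ horizontal segments merging $U_1,\dots,U_a$ into one upper tree, adding $b-1$ horizontal segments merging $L_1,\dots,L_b$ into one lower tree, and reinserting a single $j$-lifting edge at the $\mathbb{R}^2$-position of one of the original ones, whose endpoints now lie in the merged upper and merged lower trees. This reconnects all $m+1$ pieces into one graph; deleting any edge that closes a cycle keeps it a tree without increasing length. Each merging segment moves only in coordinates $1,2$ between points of $B(P)\times\{\cdot\}$, so its length is at most the $\ell_1$-diameter of $B(P)$, which is at most $l(B(P))\le K$. The bookkeeping is then
\[
 l(T') \ \le\ l(T) - mK + (a-1)K + (b-1)K + K \ =\ l(T) + (a+b-1-m)K \ =\ l(T),
\]
using $a+b=m+1$. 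Finding the components, selecting the merging segments, and deleting cycle edges are all combinatorial and run in strongly polynomial time.

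The main obstacle is exactly the cost of reconnection: after discarding $m-1$ length-$K$ edges we must restore connectivity, and a priori the pieces could be scattered so that rejoining them is dear. The argument hinges on the pairing of two facts — the counting identity $a+b=m+1$, which forces the number of merging segments to be precisely $m-1$ (beyond the one reinserted lifting edge), and the observation that the upper and the lower pieces can each be rejoined entirely within a single height-level, so that every merging segment is a genuinely short horizontal segment of length at most $l(B(P))$. The hypothesis $K\ge l(B(P))$ then makes the freed budget $(m-1)K$ dominate the reconnection cost exactly, giving no net increase.
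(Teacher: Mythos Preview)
Your proof is correct but organized differently from the paper. The paper argues iteratively: while two direction-$j$ lifting edges $e,e'$ remain, delete $e'$ (splitting $T$ into two components) and reconnect with a single horizontal edge joining an endpoint of $e$ to an endpoint of $e'$ in the same level $H_0=\{x_j=0\}$ or $H_j=\{x_j=K\}$; that edge has length at most $l(B(P))\le K=l(e')$, so the length does not increase and the count of $j$-lifting edges drops by one. Your batch version---removing all $m$ $j$-lifting edges at once, partitioning the $m+1$ resulting pieces into $a$ upper and $b$ lower components, and reconnecting with $a-1$ upper-level segments, $b-1$ lower-level segments, and one reinserted lifting edge---reaches the same conclusion via the global accounting identity $a+b=m+1$. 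The paper's argument is shorter and avoids any component counting; yours makes the structure (and the exact balance between freed budget $(m-1)K$ and reconnection cost) more explicit.

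One small remark on your justification that every lower component $L_r$ meets the $T_{top}$-space: as phrased, it argues about how a point of $L_r$ in some $T_{i'}$-space is attached ``to the rest of $L_r$'', which does not literally cover the case that $L_r$ lies entirely inside a single $T_{i'}$-space. The cleaner route is to note that every component of the forest is incident to at least one removed $j$-lifting edge, and by flatness the $x_j=0$ endpoint of such an edge must lie in the $T_{top}$-space (its partner is $(x,Ke_i)$ and only $x_j$ changes along the edge). With that observation your argument is complete.
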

\begin{proof}
  Assume that there is a direction $j\in\{3,\dots,k+2\}$, s.t. $T$
  contains two edges $e, e'$ in direction $j$. Removing $e'$ splits
  $T$ into two components which we can re-connect by an edge in either
  $H_0 := \{x: x_j = 0\}$ or $H_j := \{x: x_j = K\}$ between an
  endpoint of $e$ and $e'$. The new edge has length at most $l(B(P))
  \le K$, the length of $e'$.
\end{proof}

The proof also shows that if $K> l(B(P))$ and $T$ has minimum length, 
it contains at most one edge in each lifted direction $j \in \{3, \ldots k+2\}$.
The following lemma shows the equivalence between the original
two-level rectilinear  Steiner tree problem in the plane and the lifted
regular rectilinear Steiner tree problem.

\begin{lemma}
  If $k > 1$, a two-level rectilinear Steiner tree $T$ for $P_1,\dots,P_k$ of length
  $l(T)$ can be transformed into a $(k+2)$ dimensional Steiner tree $T'$
  for $P'$ of length at most $l(T)+kK$ and vice versa.
\label{lem:kd-reduction}
\end{lemma}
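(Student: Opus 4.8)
The plan is to prove the two directions separately, treating the forward direction as an explicit construction and reserving the structural Lemmas~\ref{lem:transform-to-flat} and~\ref{lem:transform-to-two-level} for the harder reverse direction.

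For the forward direction (two-level $\to$ lifted) I would build $T'$ by hand. I embed each bottom-level tree $T_i$ isometrically into its own lifted copy of the plane via $x\mapsto (x,K\cdot e_i)$; since this map only shifts one fixed lifting coordinate, it preserves $\ell_1$-length and sends the terminals of $P_i$ exactly to their copies in $P'$. I keep $T_{top}$ in the base plane $\Rset^2\times\{0\}$ via $x\mapsto (x,0)$, again isometrically. To reconnect, for each $i$ I add the straight segment from $(q_i,0)$ to $(q_i,K\cdot e_i)$, which has length exactly $K$ and joins the base copy of $T_{top}$ (containing $q_i$) to the lifted copy of $T_i$ (containing the lifted $q_i$). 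The result is a connected Steiner tree spanning $P'$ of length $\sum_{i=1}^k l(T_i)+l(T_{top})+kK = l(T)+kK$, as claimed.

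For the reverse direction (lifted $\to$ two-level) I would first normalise $T'$: Lemma~\ref{lem:transform-to-flat} lets me assume $T'$ is flat, and, since $K\ge l(B(P))$, Lemma~\ref{lem:transform-to-two-level} lets me further assume $T'$ has at most one edge in each lifting direction $j\in\{3,\dots,k+2\}$, both without increasing its length. The key observation is that for $k>1$ there must be at least one edge in each direction $i+2$: the terminals of $P_i$ sit at height $K$ in that coordinate while every terminal of a different part $P_h$ (which exists precisely because $k>1$) sits at height $0$, so the tree must cross that slab. Hence there is exactly one such edge, necessarily of length $K$, for each $i$. Removing these $k$ crossing edges decomposes the connected tree $T'$ into $k+1$ components; flatness forces each lifted plane $\Rset^2\times\{K\cdot e_i\}$ to be a single component carrying all terminals of $P_i$ (leaving and re-entering that plane would need two crossing edges), and the one remaining component lies entirely in the base plane.

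Finally I would read off the two-level tree by projecting away the lifting coordinates. Each dropped coordinate is constant on its component ($0$ on the base, $K$ on the $i$-th slab), so the projections are length-preserving: the base component yields a tree $T_{top}$ and the $i$-th lifted component yields a tree $T_i$ spanning $P_i$, with $l(T)=\sum_i l(T_i)+l(T_{top})=l(T')-kK$. Moreover the base and lifted endpoints of the $i$-th crossing edge project to a common point $q_i$, which therefore lies in both $T_{top}$ and $T_i$ and serves as a valid connection point, so $T_{top}$ spans $\{q_1,\dots,q_k\}$ and each $T_i$ spans $P_i\cup\{q_i\}$. I expect the main obstacle to be exactly this reverse direction: justifying that after the normalisation of Lemmas~\ref{lem:transform-to-flat} and~\ref{lem:transform-to-two-level} the single crossing edges localise at the connection points and the components separate cleanly into one base plane plus $k$ disjoint lifted planes. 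This is where the hypotheses $k>1$ and $K\ge l(B(P))$ are indispensable and where the equivalence of the two problems (up to the additive $kK$) is genuinely established.
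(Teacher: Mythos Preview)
Your proposal is correct and follows essentially the same approach as the paper: the forward direction is the identical lift-and-bridge construction, and the reverse direction applies Lemmas~\ref{lem:transform-to-flat} and~\ref{lem:transform-to-two-level} and then cuts the $k$ lifting edges before projecting. You supply more justification than the paper does---in particular, why there is \emph{exactly} one edge in each lifting direction (using $k>1$), why the resulting $k+1$ components sit cleanly in the base plane and the $k$ lifted planes, and why the endpoints of each crossing edge project to a common connection point---all of which the paper leaves implicit.
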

\begin{proof}
  W.l.o.g. we assume that the center of the coordinate system coincides with the
  bounding box center of $P$.
  Let $T = (T_{top}, T_1, \ldots, T_k)$ be a two-level Steiner tree of
  length $l(T)$.
  We embed the vertices $x \in V(T_{top})\subset \Rset^2$ into $\Rset^2
  \times \{0\}^k$ as $(x,0)$, where $0$ is a k-dimensional zero
  vector.  
  For $T_i$ ($i=\{1,\dots,k\}$, we embed $x \in V(T_i)\subset \Rset^2$
  as $(x,K\cdot e_i)\in \Rset^{2+k}$, where $e_i\in \Rset^k$ is again a unit vector 
  with value one in its $i$-th coordinate.

  To connect the top-level and its subtrees, we pick for each subtree
  $i\in\{1,\dots,k\}$ a connection point $q_i\in T_{top}\cap T_i$ and
  connect the lifted components by a new edge $\{(q_i,0),(q_i,K\cdot
  e_i)\}$ of length $K$. Clearly, the length of the lifted tree is 
  $l(T) + kK$.

  Now let $T'$ be a rectilinear Steiner tree for $P'$
  of length $l(T')$. By applying Lemma \ref{lem:transform-to-flat} we
  obtain a $(k+2)$-dimensional flat Steiner tree. 
  Applying Lemma \ref{lem:transform-to-two-level}, we can further assume that
  for each $j\in\{3,\dots,k+2\}$, $T'$ contains   at most one edge in direction $j$.

  Removing all $k$ edges of length $K$, $T'$ is decomposed into $k+1$
  subtrees.  Projecting these onto the first two coordinates we obtain
  a feasible two-level Steiner tree of length at most $l(T')-kK$.
\end{proof}

\begin{theorem}
  For bounded $k$ there is a PTAS for the two-level rectilinear  Steiner tree problem.
\end{theorem}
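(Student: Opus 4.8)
The plan is to run the chain of reductions already set up in Lemmas~\ref{lem:transform-to-flat}--\ref{lem:kd-reduction} in reverse: lift the planar instance to the $(k+2)$-dimensional rectilinear Steiner tree instance on $P'$, apply Arora's PTAS \cite{arora} there, and pull the resulting tree back to a two-level tree. First I would dispose of the case $k=1$, which is an ordinary planar rectilinear Steiner tree problem and is handled directly by Arora's PTAS, and assume $k>1$. Fix $K=l(B(P))$ and form $P'\subset\Rset^{2+k}$ (with $|P'|=|P|=n$) as in the lifting construction. Combining the two directions of Lemma~\ref{lem:kd-reduction} pins down the optimum lifted length exactly: the forward direction gives an optimum lifted tree of length at most $l(T^{\star})+kK$, and the backward direction shows any lifted tree has length at least $l(T^{\star})+kK$, where $T^{\star}$ denotes an optimum two-level tree. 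Hence the optimum length in the lifted instance is precisely $l(T^{\star})+kK$.

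Next I would run Arora's PTAS for rectilinear Steiner trees in the fixed dimension $2+k$ with an accuracy parameter $\delta$ to be chosen, obtaining a tree $T'$ for $P'$ with $l(T')\le(1+\delta)\,(l(T^{\star})+kK)$. Applying Lemma~\ref{lem:transform-to-flat} to flatten $T'$, then Lemma~\ref{lem:transform-to-two-level} to reduce to one edge per lifting direction, and finally the backward direction of Lemma~\ref{lem:kd-reduction} to project onto the first two coordinates, converts $T'$ into a feasible two-level tree $T$ with $l(T)\le l(T')-kK\le(1+\delta)\,l(T^{\star})+\delta kK$.

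The crux, and the step I expect to be the main obstacle, is absorbing the additive term $\delta kK$: since $K=l(B(P))$ is chosen large, the multiplicative $(1+\delta)$ factor acting on the inflated length leaks an uncontrolled $\delta kK$ unless $K$ can be bounded by the true optimum. To close this gap I would observe that every two-level tree connects all of $P$: the set $T^{\star}_{top}\cup T^{\star}_1\cup\dots\cup T^{\star}_k$ is connected and contains $P$, so $l(T^{\star})$ is at least the length of a minimum rectilinear Steiner tree for $P$, which in turn is at least the half-perimeter $l(B(P))=K$ of the bounding box. Thus $K\le l(T^{\star})$, so $\delta kK\le\delta k\,l(T^{\star})$ and $l(T)\le\bigl(1+\delta(k+1)\bigr)l(T^{\star})$. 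Choosing $\delta=\epsilon/(k+1)$ yields $l(T)\le(1+\epsilon)\,l(T^{\star})$. Because $k$ is bounded, both the dimension $2+k$ and the parameter $\delta$ are constants, so Arora's algorithm runs in polynomial time on the $n$ terminals of $P'$, and the transformations of the three lemmas are (strongly) polynomial; hence the whole procedure is a PTAS.
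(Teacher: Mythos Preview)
Your proposal is correct and follows essentially the same approach as the paper: lift to $\Rset^{2+k}$ with $K=l(B(P))$, run Arora's PTAS with accuracy $\epsilon/(k+1)$, and pull back via Lemma~\ref{lem:kd-reduction}, using $K\le l(T^{\star})$ to absorb the additive $\delta kK$. You are in fact a bit more careful than the paper in two places---you dispose of $k=1$ separately and you justify the inequality $K\le l(T^{\star})$, which the paper uses without comment---but the argument is the same.
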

\begin{proof}
Choose $K=l(B(P))$ and for $\epsilon > 0$ set $\epsilon' := \frac{1}{k+1} \epsilon$. 
Then compute an $(1+\epsilon')$-approximate  $(k+2)$-dimensional 
Steiner tree $T'$ for the lifted terminal set $P'$ with Aroras PTAS \cite{arora} that has a polynomial running in bounded dimension. 
Then we apply Lemma~\ref{lem:kd-reduction} to obtain a
two-level Steiner tree $T= (T_{top},T_1,\dots,T_k)$ for $P_1,\dots, P_k$ with length at most $l(T') - kK$.
Let $T'^{\star}$ and $T^{\star}$ be optimum Steiner trees for $P'$ and $P$. Since $l(T^{\star}) \ge K$ the length of $T$ is
\begin{equation}
\begin{array}{rl}
l(T) & \le l(T') -  kK  \le (1+\epsilon') l(T'^{\star}) - kK \le (1+\epsilon') l(T^{\star}) + (1+\epsilon')kK -kK \\
     &  \le (1+(k+1)\epsilon') l(T^{\star}) = (1+\epsilon) l(T^{\star}).
\end{array}
\end{equation}
\end{proof}

Note that $\lceil \frac{k}{2} \rceil + 2$ dimensions would also be sufficient
to achieve this result by lifting two partitions into one extra dimension (one partition in positive direction and
the other in negative direction). 
This lifting method does not work using Steiner tree approximation algorithms with  a constant factor $\alpha$, because
we get an additional additive error of $\alpha \cdot k\cdot K$, which in total is  essentially not better than the $2\alpha$-factor
approximation by the simple algorithm in Section~\ref{sec:simple-bottom-up}.

Our construction transforms an optimum $(k+2)$-dimensional Steiner tree for $P'$ in the $(k+2)$-dimensional Hanan grid, 
which always exist due to Snyder  \cite{snyder}, into a two-level rectilinear Steiner tree in the 2-dimensional  Hanan grid defined by $P$.
\begin{cor}
 Given a set of terminals $P = (P_1, \ldots, P_k)$, there is always an optimum two-level rectilinear Steiner tree for $P$
 in its Hanan grid.
 \label{cor:contained-in-Hanan-grid}
\end{cor}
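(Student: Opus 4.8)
The plan is to derive the existence of an optimum two-level tree in the $2$-dimensional Hanan grid of $P$ purely by composing the results already established in this section, without any fresh geometric argument. The key observation is that the chain of lemmas gives a \emph{length-preserving} correspondence between optimum two-level trees for $P_1,\dots,P_k$ and optimum $(k+2)$-dimensional ordinary Steiner trees for the lifted set $P'$, and that Snyder's theorem places the latter in a higher-dimensional Hanan grid whose projection is exactly the Hanan grid of $P$.

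First I would fix $K > l(B(P))$ and invoke Snyder~\cite{snyder} to obtain an optimum rectilinear Steiner tree $T'^{\star}$ for $P'$ lying in the $(k+2)$-dimensional Hanan grid of $P'$. By construction this grid is the product of the planar Hanan grid of $P$ with the coordinate values $\{0,K\}$ in each lifting direction, so every vertex of $T'^{\star}$ has its first two coordinates at grid lines determined by $P$. Next I would push $T'^{\star}$ through Lemma~\ref{lem:transform-to-flat} and Lemma~\ref{lem:transform-to-two-level} to make it flat and to ensure at most one edge per lifting direction; since $K > l(B(P))$, the remark following Lemma~\ref{lem:transform-to-two-level} guarantees an optimum tree already has this form, and—crucially—both transformations only project subtrees within the lifted hyperplanes and never move the first two coordinates off the Hanan grid lines of $P$. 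Finally, applying the reverse direction of Lemma~\ref{lem:kd-reduction}, I would remove the $k$ lifting edges and project onto the first two coordinates to obtain a two-level tree $T$ for $P$ of length exactly $l(T'^{\star}) - kK$, which by the equivalence is optimum, and whose vertices inherit Hanan-grid positions from $T'^{\star}$.

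The one point requiring care—and the main obstacle—is verifying that \emph{no} step in this pipeline introduces a vertex whose planar coordinates fall outside the Hanan grid of $P$. The projection in Lemma~\ref{lem:kd-reduction} is harmless, but I must check that the cut-based flattening of Lemma~\ref{lem:transform-to-flat} and the reconnection edge of Lemma~\ref{lem:transform-to-two-level} act only in lifting directions (or along existing grid-aligned segments), so that the first two coordinates of every vertex are preserved or collapsed onto already-present grid lines. Since Snyder's grid for $P'$ projects to the Hanan grid of $P$ in the first two coordinates, and all subsequent operations respect this, the resulting $T$ lies in the Hanan grid of $P$, completing the argument.
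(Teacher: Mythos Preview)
Your proposal is correct and follows exactly the route the paper sketches in the sentence preceding the corollary: take an optimum $(k+2)$-dimensional Steiner tree for $P'$ in its Hanan grid (Snyder), push it through Lemmas~\ref{lem:transform-to-flat}--\ref{lem:kd-reduction}, and observe that the resulting two-level tree inherits the planar Hanan-grid coordinates. You are in fact more careful than the paper, since you explicitly flag and verify the one non-obvious point---that the flattening and reconnection steps act only in the lifting coordinates and therefore cannot move any vertex off the Hanan grid of $P$ in the first two dimensions.
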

This also proves the following simple fact.
\begin{cor}
  \label{cor:contained-in-NP}
  The  two-level rectilinear Steiner tree problem is in \NP.
\end{cor}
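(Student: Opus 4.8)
The plan is to verify membership in \NP{} in the standard way: for the decision version of the problem (given $P=(P_1,\dots,P_k)$ and a bound $L$, does there exist a two-level rectilinear Steiner tree $T$ with $l(T) \le L$?), I would exhibit a certificate of size polynomial in the input together with a polynomial-time verifier. The whole argument rests on Corollary~\ref{cor:contained-in-Hanan-grid}, which guarantees that every YES-instance has an optimum solution lying entirely in the Hanan grid of $P$; this grid therefore supplies a finite, small vertex set from which the certificate can be drawn.

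First I would bound the certificate size. The Hanan grid of $P$ is induced by the at most $n$ distinct $x$-coordinates and at most $n$ distinct $y$-coordinates of the $n$ terminals, so it has at most $n^2$ vertices. A two-level rectilinear Steiner tree supported on this grid is the union of the $k+1$ trees $T_{top},T_1,\dots,T_k$, each a subgraph of the grid graph; hence each tree has at most $n^2$ vertices and at most $n^2-1$ edges, and the entire structure, together with the chosen connection points $q_1,\dots,q_k$, can be written down in $\mathcal{O}((k+1)n^2)$ space. This is polynomial in the size of the input, which is exactly what a certificate must satisfy. By Corollary~\ref{cor:contained-in-Hanan-grid}, whenever the answer is YES such a grid-supported solution of length at most $L$ exists, so a valid certificate is always available.

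Next I would describe the verifier. Given a candidate certificate, one checks in polynomial time that each $T_i$ is a connected subgraph of the Hanan grid spanning $P_i\cup\{q_i\}$ with $q_i\in T_{top}\cap T_i$, and that $T_{top}$ is a connected subgraph spanning $\{q_1,\dots,q_k\}$; these are routine graph-connectivity tests. One then computes $l(T)=\sum_{i=1}^{k}l(T_i)+l(T_{top})$ from the $\ell_1$-lengths $\mnorm{x-y}$ of the grid edges and accepts iff $l(T)\le L$. Every step runs in time polynomial in $n$ and $k$, so the problem lies in \NP. There is no genuine obstacle once Corollary~\ref{cor:contained-in-Hanan-grid} is in hand; the only point requiring care is the polynomial bound on the number of grid vertices, which keeps the certificate polynomial-sized, and this follows immediately from the $\mathcal{O}(n^2)$ size of the Hanan grid.
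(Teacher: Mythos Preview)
Your proposal is correct and follows exactly the approach the paper intends: the paper itself offers no explicit proof beyond the sentence ``This also proves the following simple fact,'' relying on Corollary~\ref{cor:contained-in-Hanan-grid} for the existence of a polynomial-size certificate on the Hanan grid. You have simply spelled out the standard certificate-and-verifier argument that the paper leaves implicit, and all the details (the $\mathcal{O}(n^2)$ grid bound, the $\mathcal{O}((k+1)n^2)$ certificate, the polynomial-time feasibility and length checks) are correct.
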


\section{Predetermined Connection Points} 
\label{sec:predetermined-connection-points}

In all  algorithms of this section we predetermine a connection point $q_i$ for
each set $P_i$ $(i=1,\dots,k$) and then call a Steiner tree
approximation algorithm for $\{q_1,\dots,q_k\}$ to get $T_{top}$ and
$P_i\cup \{q_i\}$ to get $T_i$ ($i=1\dots,k$).
We use the fact that we consider rectilinear instances to obtain
better approximation factors than in Section~\ref{sec:simple-bottom-up}.

\subsection{Bounding Box Center} 
\label{sec:bounding-box-center}
A natural approach is to choose each connection point
$q_i$ ($i\in \{1,\dots,k\}$) as the  center of the bounding box $B(P_i)$.

\begin{theorem}
  Using bounding box centers as connection points, we get a $1.75\alpha$-factor approximation algorithm for the two-level rectilinear Steiner tree problem,
  when using an $\alpha$-factor approximation algorithm for rectilinear Steiner trees as a subroutine. 
\end{theorem}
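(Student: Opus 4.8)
The plan is to charge the length of our solution against an optimum $T^{\star}=(T^{\star}_{top},T^{\star}_1,\dots,T^{\star}_k)$ with the same bookkeeping as in Section~\ref{sec:simple-bottom-up}, but now exploiting that $q_i$ is the center of $B(P_i)$. Write $S_i$ for a minimum rectilinear Steiner tree of $P_i$ (chosen inside $B(P_i)$), $S_i^{+}$ for one of $P_i\cup\{q_i\}$, $S_{top}$ for one of $\{q_1,\dots,q_k\}$, and $q^{\star}_i\in T^{\star}_{top}\cap T^{\star}_i$ for the optimal connection points. Since the subroutine is an $\alpha$-approximation, $l(T)=l(T_{top})+\sum_{i}l(T_i)\le \alpha\Big(l(S_{top})+\sum_{i=1}^{k}l(S_i^{+})\Big)$, so it suffices to bound the bracket by $\tfrac74\,l(T^{\star})$.

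Two elementary rectilinear facts drive the estimate. First, any connected set spanning a box of width $w_i$ and height $h_i$ (i.e.\ having that bounding box) has $\ell_1$-length at least the half-perimeter $w_i+h_i$, since its horizontal edges already project onto an interval of length $w_i$ and its vertical edges onto one of length $h_i$; in particular $l(T^{\star}_i)\ge l(S_i)\ge w_i+h_i$. Second, such a set contains a point with $x$-coordinate equal to the box center and one with $y$-coordinate equal to the box center, so the center $q_i$ is within distance $\tfrac12\min(w_i,h_i)\le\tfrac14(w_i+h_i)$ of it. Applying the second fact to $S_i$ and attaching $q_i$ to its nearest point gives the bottom-level bound $l(S_i^{+})\le l(S_i)+\tfrac14(w_i+h_i)$. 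For the top level I assemble a connected set containing all $q_i$ from $T^{\star}_{top}$ by joining each $q_i$ to $q^{\star}_i$, which yields $l(S_{top})\le l(T^{\star}_{top})+\sum_{i}\mnorm{q_i-q^{\star}_i}$.

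The main obstacle is that $\mnorm{q_i-q^{\star}_i}$ can be almost as large as $l(T^{\star}_i)$ when $q^{\star}_i$ lies far outside $B(P_i)$; used in isolation this only reproduces the factor $2$ of Section~\ref{sec:simple-bottom-up}. The resolution is that the very same overhang makes the bottom level cheaper, and the two effects cancel. Let $o_i\ge 0$ be the $\ell_1$-amount by which $q^{\star}_i$ overhangs $B(P_i)$, split into its horizontal and vertical parts so that each coordinate is treated independently. Then $\mnorm{q_i-q^{\star}_i}\le\tfrac12(w_i+h_i)+o_i$, while clamping $T^{\star}_i$ back into $B(P_i)$ still connects $P_i$ and removes at least $o_i$ of length, so $l(S_i)\le l(T^{\star}_i)-o_i$. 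Substituting both estimates, the $o_i$ terms cancel and index $i$ contributes $l(T^{\star}_i)+\tfrac34(w_i+h_i)\le\tfrac74\,l(T^{\star}_i)$, using $w_i+h_i\le l(S_i)\le l(T^{\star}_i)$. Summing over $i$ and adding $l(T^{\star}_{top})$ bounds the bracket by $l(T^{\star}_{top})+\tfrac74\sum_{i}l(T^{\star}_i)\le\tfrac74\,l(T^{\star})$, hence $l(T)\le 1.75\,\alpha\,l(T^{\star})$. The only remaining care is to verify the two geometric facts and the clamping estimate rigorously, in particular that horizontal clamping removes at least the horizontal overhang and vertical clamping at least the vertical overhang.
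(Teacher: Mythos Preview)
Your argument is correct and follows the same overall bookkeeping as the paper: bound $l(T_{top})$ via $l(T^{\star}_{top})+\sum_i\mnorm{q_i-q^{\star}_i}$, bound each $l(T_i)$ by an optimum tree on $P_i$ plus an attachment cost of at most $\tfrac14(w_i+h_i)$, and combine using $w_i+h_i\le l(T^{\star}_i)$.

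The one genuine difference is how the case $q^{\star}_i\notin B(P_i)$ is handled. The paper sidesteps it entirely: among all optimal two-level trees it picks one with $T^{\star}_{top}$ as long as possible, which forces $q^{\star}_i\in B(P_i)$ (any portion of $T^{\star}_i$ outside $B(P_i)$ can be reassigned to $T^{\star}_{top}$ without changing the total length), so $\mnorm{q_i-q^{\star}_i}\le\tfrac12(w_i+h_i)$ directly and no overhang term ever appears. You instead keep an arbitrary optimum, introduce the overhang $o_i$, and cancel it against the savings from clamping $T^{\star}_i$ into $B(P_i)$. Both routes are valid; the paper's normalisation is a one-line device that avoids the clamping verification and is reused in the subsequent adjusted-center analysis, while your clamping argument is more self-contained and does not require reasoning about which optimum to fix. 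A second, cosmetic difference is that the paper attaches $q_i$ to $T^{\star}_i$ (via a point on the vertical line through $q_i$) rather than to a minimum tree $S_i$ of $P_i$, but the resulting $\tfrac14(w_i+h_i)$ bound is identical.
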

\begin{proof}
  Let $T$ be the resulting two level Steiner tree and let $T^{\star} = \{T^{\star}_{top}, T^{\star}_1, \ldots, T^{\star}_k\}$ be a
  minimum two-level Steiner tree. 
  We choose $T^{\star}_{top}$ under all minimum two-level Steiner trees as long as possible so that we can choose connection points  $q^{\star}_i \in T^{\star}_{top}
  \cap T^{\star}_i \cap B(P_i)$ for all $i\in\{1,\dots,k\}$.

  Now, consider a partition set $P_i$ ($i\in\{1,\dots,k\})$.  We assume w.l.o.g. that the
  horizontal length of $B(P_i)$ is no less than the vertical length.
  Let $q'_i$ be a point in the intersection of $T^{\star}_i$ and the vertical line through $q_i$. 
  Note that $\mnorm{q'_i - q_i} \le \frac{1}{4}
  l(B(P_i))$ by the shape of $B(P_i)$.

  Now  for each $i \in \{1,\dots,k\}$, $E\left(T^{\star}_{top}\right)\cup \left(\bigcup_{i = 1}^{k} \{q^{\star}_i,q_i\}\right)$ covers a Steiner tree for $\{q_1, \ldots, q_k\}$, and $E(T^{\star}_i) \cup \{q_i,q'_i\}$ covers a Steiner tree for $P_i \cup \{q_i\}$. Thus,
\begin{align*} 
l(T) & = l(T_{top}) + \sum_{i = 1}^{k} l(T_i) \\
&\le \alpha \cdot l(T^{\star}_{top}) + \alpha \sum_{i = 1}^{k} (\lVert q^{\star}_i - q_i\rVert_1) + \sum_{i = 1}^{k} l(T_i) \\ 
&\le \alpha \cdot l(T^{\star}_{top}) + \alpha \sum_{i = 1}^{k} (\lVert q^{\star}_i - q_i\rVert_1) 
 + \alpha \sum_{i = 1}^{k} l(T^{\star}_i) + \alpha \sum_{i = 1}^{k} (\lVert q'_i - q_i\rVert_1) \\ 
&\le \alpha \cdot l(T^{\star}_{top}) + \frac{7}{4} \alpha \sum_{i = 1}^{k} l(T^{\star}_i) ~\le~ \frac{7}{4} \alpha \cdot (l(T^{\star}_{top}+ \sum_{i = 1}^{k} l(T^{\star}_i)) ~\le~ 1.75 \alpha \cdot l(T^{\star}).
\end{align*}
\end{proof}


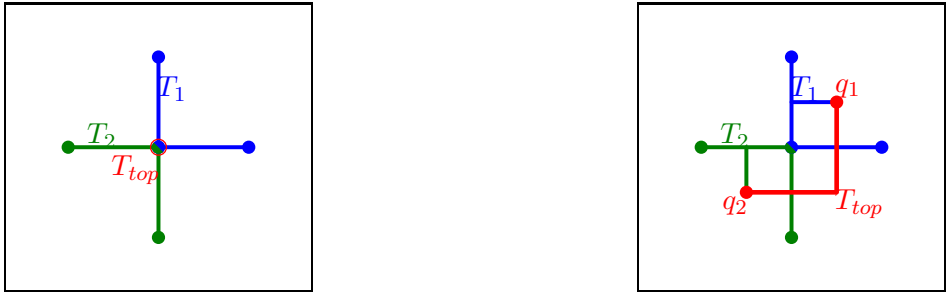
\begin{figure} [!t]
\begin{minipage}[c]{0.49\textwidth}
\centering
\fbox{
\begin{tikzpicture}[line cap=round,line join=round,x=0.6cm,y=0.6cm]
\clip(-3,-3) rectangle (3,3);
\fill [color=blue] (0,2) circle (2.5pt);
\fill [color=blue] (2,0) circle (2.5pt);
\fill [color=darkGreen] (-2,0) circle (2.5pt);
\fill [color=darkGreen] (0,-2) circle (2.5pt);
\fill [color=darkGreen!50!blue] (0,0) circle (2.5pt);
\draw [line width=1.5pt,color=blue] (0,0) -- (0,2);
\draw [line width=1.5pt,color=blue] (0,0) -- (2,0);
\draw [line width=1.5pt,color=darkGreen] (0,0) -- (0,-2);
\draw [line width=1.5pt,color=darkGreen] (0,0) -- (-2,0);
\node [color=blue] at (0.3,1.3) {$T_1$};
\node [color=darkGreen] at (-1.25,0.26) {$T_2$};
\node [color=red] at (-0.5,-0.5) {$T_{top}$};
\draw [color=red,line width=0.5pt] (0,0) circle (3pt);
\fill [color=blue] (0,0) circle (2.5pt);
\begin{scope}
\clip (0, 0) circle (2.5pt);
\fill[color=darkGreen] (-1, 1) -- (1,-1) -- (-1,-1);
\end{scope}
\end{tikzpicture}
}
\end{minipage}
\hfill
\begin{minipage}[c]{0.49\textwidth}
\centering
\fbox{
\begin{tikzpicture}[line cap=round,line join=round,x=0.6cm,y=0.6cm]
\clip(-3,-3) rectangle (3,3);
\fill [color=blue] (0,2) circle (2.5pt);
\fill [color=blue] (2,0) circle (2.5pt);
\fill [color=darkGreen] (-2,0) circle (2.5pt);
\fill [color=darkGreen] (0,-2) circle (2.5pt);
\draw [line width=1.5pt,color=blue] (0,0) -- (0,2);
\draw [line width=1.5pt,color=blue] (0,1) -- (1,1);
\draw [line width=1.5pt,color=blue] (0,0) -- (2,0);
\draw [line width=1.5pt,color=darkGreen] (0,0) -- (0,-2);
\draw [line width=1.5pt,color=darkGreen] (0,0) -- (-2,0);
\draw [line width=1.5pt,color=darkGreen] (-1,0) -- (-1,-1);
\node [color=blue] at (0.3,1.3) {$T_1$};
\node [color=red] at (1.5,-1.25) {$T_{top}$};
\fill [color=red] (1,1) circle (2.5pt);
\fill [color=red] (-1,-1) circle (2.5pt);
\node [color=red] at (1.25,1.3) {$q_1$};
\node [color=red] at (-1.25,-1.3) {$q_2$};
\node [color=darkGreen] at (-1.25,0.26) {$T_2$};
\draw [line width=1.75pt,color=red] (-1,-1) -| (1,1);
\fill [color=blue] (0,0) circle (2.5pt);
\begin{scope}
\clip (0, 0) circle (2.5pt);
\fill[color=darkGreen] (-1, 1) -- (1,-1) -- (-1,-1);
\end{scope}
\end{tikzpicture}
}
\end{minipage}
\caption{A tight example when choosing connection points as bounding box centers.} 
\label{fig:sharp-example-center-point}
\end{figure}

Figure~\ref{fig:sharp-example-center-point} shows that the factor $(\frac{7}{4}+\epsilon)$  is sharp.
For the instance $P_1 = \{ (0,0), (1,0), (0,1) \}$, $P_2 = \{ (0,0), (-1,0), (0,-1) \}$
a minimum two-level rectilinear  Steiner tree of length $4$ is shown on the left with $l(T_{top})=0$.
On the right,  with bounding box centers as connection points and  shortest Steiner trees the
  total length is  $7$.

\subsection{Adjusted Bounding Box Center}
We can improve the approximation factor by a more careful choice of
the connection point.
For a set $P_i$ ($i\in \{1,\dots,k\}$), we call the coordinate system with
origin in the central point of its bounding box the {\it coordinate
  system of $P_i$}.
If no terminal in a subtree instance $P_i$ is
located in the lower left quadrant of the bounding box w.r.t. its
coordinate system, it appears reasonable to shift the connection point
to the upper right towards the actual terminals, e.\ g.\ one would  move $q_2$ towards the upper right in
Figure~\ref{fig:sharp-example-center-point}.

Otherwise, if a set $P_i$ of terminals contains an element in each
quadrant of its bounding box $B(P_i)$, we call the bounding box
$B(P_i)$ {\it complete}, as  the left example in Figure~\ref{fig:adjusted-box-center} shows.
For subtrees with a complete bounding box we choose the connection point
to the top-level tree as the central point of the bounding box as in
Section~\ref{sec:bounding-box-center}.

\begin{figure} [!b]
\begin{minipage}[c]{0.49\textwidth}
\centering
\fbox{
\begin{tikzpicture}[line cap=round,line join=round,x=0.6cm,y=0.6cm]
\draw[<->,color=black] (-6,0) -- (6,0);
\foreach \x in {-5,-4,-3,-2,-1,,1,2,3,4,5}
\draw[shift={(\x,0)},color=black] (0pt,2pt) -- (0pt,-2pt) node[below] {\footnotesize $\x$};
\draw[<->,color=black] (0,-3) -- (0,3);
\foreach \y in {-2,-1,,1,2}
\draw[shift={(0,\y)},color=black] (2pt,0pt) -- (-2pt,0pt) node[left] {\footnotesize $\y$};
\clip(-6,-3) rectangle (6,3);
\draw[color=black] (-0.24,-0.23) node {\footnotesize $0$};
\draw [line width=1.2pt,color=blue] (-5,2)-- (5,2);
\draw [line width=1.2pt,color=blue] (-5,2)-- (-5,-2);
\draw [line width=1.2pt,color=blue] (-5,-2)-- (5,-2);
\draw [line width=1.2pt,color=blue] (5,-2)-- (5,2);
\draw[color=blue] (1.24,2.24) node {$B(P_i)$};
\fill [color=red] (0,0) circle (2.5pt);
\draw[color=red] (0.24,0.24) node {$q_i$};
\fill [color=blue] (5,1) circle (2.5pt);
\fill [color=blue] (3,2) circle (2.5pt);
\fill [color=blue] (4,-2) circle (2.5pt);
\fill [color=blue] (-5,0.5) circle (2.5pt);
\fill [color=blue] (-1,-1.5) circle (2.5pt);
\fill [color=blue] (-2,-1) circle (2.5pt);
\fill [color=blue] (2,1) circle (2.5pt);
\fill [color=blue] (3,-0.5) circle (2.5pt);
\fill [color=blue] (-3,1) circle (2.5pt);
\end{tikzpicture}
}
\end{minipage}
\hfill
\begin{minipage}[c]{0.49\textwidth}
\centering
\fbox{
\begin{tikzpicture}[line cap=round,line join=round,x=0.6cm,y=0.6cm]
\draw[<->,color=black] (-6,0) -- (6,0);
\foreach \x in {-5,-4,-3,-2,-1,,1,2,3,4,5}
\draw[shift={(\x,0)},color=black] (0pt,2pt) -- (0pt,-2pt) node[below] {\footnotesize $\x$};
\draw[<->,color=black] (0,-3) -- (0,3);
\foreach \y in {-2,-1,,1,2}
\draw[shift={(0,\y)},color=black] (2pt,0pt) -- (-2pt,0pt) node[left] {\footnotesize $\y$};
\clip(-6,-3) rectangle (6,3);
\draw[color=black] (-0.24,-0.23) node {\footnotesize $0$};
\draw [line width=1.2pt,color=blue] (-5,2)-- (5,2);
\draw [line width=1.2pt,color=blue] (-5,2)-- (-5,-2);
\draw [line width=1.2pt,color=blue] (-5,-2)-- (5,-2);
\draw [line width=1.2pt,color=blue] (5,-2)-- (5,2);
\draw[color=blue] (1.24,2.24) node {$B(P_i)$};
\fill [color=red] (1,1) circle (2.5pt);
\draw[color=red] (1.34,0.77) node {$q_i$};
\draw [line width=1.2pt,color=red,dash pattern=on 3pt off 3pt] (-5,1) -- (1,1);
\draw [line width=1.2pt,color=red,dash pattern=on 3pt off 3pt] (1,1) -- (1,-2);
\draw [line width=1.2pt,color=orange,dash pattern=on 3pt off 3pt] (1.5,2) -- (1.5,1.5);
\draw [line width=1.2pt,color=orange,dash pattern=on 3pt off 3pt] (5,1.5) -- (1.5,1.5);
\fill [color=blue] (5,-1) circle (2.5pt);
\fill [color=blue] (3,-2) circle (2.5pt);
\fill [color=blue] (4,-2) circle (2.5pt);
\fill [color=blue] (-5,1.5) circle (2.5pt);
\fill [color=blue] (-1,2) circle (2.5pt);
\fill [color=blue] (-2,1) circle (2.5pt);
\fill [color=blue] (2,-1) circle (2.5pt);
\fill [color=blue] (3,0.5) circle (2.5pt);
\fill [color=blue] (4,1.5) circle (2.5pt);
\end{tikzpicture}
}
\end{minipage}
\caption{On the left an example for a complete bounding box with $q_i$ chosen at the center.
On the right an example for an incomplete bounding box.
$(t_i^1,t_i^1)$ is the bend  of the red line, here defining $q_i$, and $(t_i^2,t_i^2)$  the bend on the orange line.}
\label{fig:adjusted-box-center}
\end{figure}
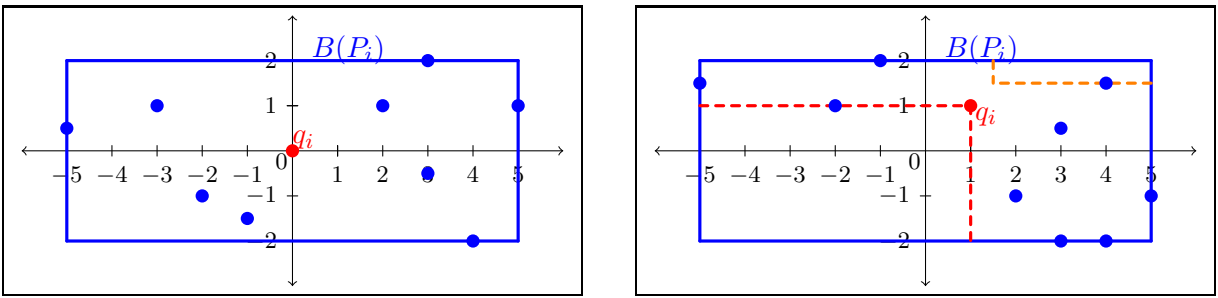

For sets $P_i$ with an incomplete bounding box $B(P_i)$ we assume 
w.l.o.g. (after reflection and rotation) that there is no point
of $P_i$ in the lower-left quadrant of the coordinate system for $P_i$
and that the horizontal length is not less than the
vertical length  of the bounding box $B({P_i})$.  
We then define 
\begin{align} 
t_i^1 &= \max\{s \in [0, t_i^{\max}]\;:\; P \cap \{(x,y): x<s,y<s\}= \emptyset\}, \nonumber \\
t_i^2 &= \min\{s \in [0, t_i^{\max}]\;:\; P \cap \{(x,y): x>s,y>s\}= \emptyset\}, \text{ and} \label{defl} \\
t_i &= \min \{ t_i^1 , t_i^2, \beta t_i^{\max} \},  \nonumber
\end{align}
where $\beta \in [0,1]$ is a parameter we will choose later and $t_i^{\max}$ is half the vertical length of $B(P_i)$,
which by our assumption is also the (vertical) distance from the center to the boundary of $B(P_i)$.

For subtrees with incomplete bounding box we choose the connection
point $q_i = ({q_i}_x, {q_i}_y)$ to the top-level tree as the central point of the bounding
box plus the vector $(t_i, t_i)$ (see also
Figure~\ref{fig:adjusted-box-center}).
As before,  we compute Steiner trees $T_{top}$ for $\{q_1, \ldots, q_k\}$ and $T_i$ for $P_i \cup \{q_i\}$ ($i\in \{1,\dots,k\}$).

Moreover for $i\in \{1,\dots,k\}$ the trees $T_i$ are refined as
follows.  We compute a second Steiner tree $T'_i$ for $P_i \cup
\{q_i\}$ as follows: We compute a Steiner tree $T'_i$ for
$P_i$. Thereby, we embed maximal paths in $T'_i$ containing only
 Steiner vertices with degree two so that each such path has minimum distance to
$q_i$ while preserving its length.  We then add an edge from $q_i$ to
$a_i$, where $a_i$ is a point in $T'_i$ minimizing the distance to
$q_i$.
Finally, if $l(T'_i)\le l(T_i)$, we  replace $T_i$ by $T'_i$.

Let $T$ be the two-level
Steiner tree computed as described and
let $T^{\star} = \{T^{\star}_{top}, T^{\star}_1, \ldots, T^{\star}_k\}$ be
an minimum two-level Steiner tree. Again we choose $T^{\star}_{top}$
under all minimum two-level Steiner trees as large as possible so that
there is a connection point $q^{\star}_i = ({q^{\star}_i}_x,{q^{\star}_i}_y) \in T^{\star}_{top} \cap
T^{\star}_i \subseteq B(P_i)$.

We start with two lemmas  bounding the distance $\mnorm{q_i-a_i}$
between the connection point $q_i$ and a nearest neighbor $a_i$  in $T'_i$ by the length $l(T_i^{\star})$ of the subtree
in an optimum solution.

\begin{lemma} \label{lemma1}
Let $i\in \{1,\dots,k\}$ and $T'_i, a_i$ be constructed as above.
If $B(P_i)$ is complete, then $l(B(P_i)) + \lVert q_i - a_i \rVert_1 \le l(T^{\star}_i)$.
\end{lemma}
\begin{proof}

Define $h :=  \lVert q_i - a_i \rVert_1$. Since $B(P_i)$ is complete, $T'_i$ intersects at least three of the four axes of the coordinate system to $P_i$. We assume w.l.o.g. that $T'_i$ intersects the left, upper and right axis.
(We did not rotate complete boxes before but only incomplete ones.)

By the choice of $T'_i$ and $h$ there exist (see also Figure~\ref{fig:lemma1})
\begin{align*}
p &\in  \{(x,y) \in P_i : x \le -h, y \le 0\},\\
p'& \in \{(x,y) \in P_i : x \ge h, y \le 0\},\\
u &\in  \{(x,y) \in P_i : x < 0, y \ge h\},\\
v_z &\in V_z :=  \{(x,y) \in P_i: x \ge z, y \ge h-z \} \text{ for all } z \in [0, h].
\end{align*}

Let $v_h \in V_h$. If the unique $T^{\star}_i$-paths from $p$ to $u$ and from $p'$ to $v_h$ intersect, then
$T^{\star}_i$ connects  the lines $\{(x,y) : x = 0\}$ and $\{(x,y) : x = h\}$ twice, and therefore $l(B(P_i)) + h \le l(T^{\star}_i)$.

Otherwise, we can choose a minimum $z \ge 0$  such that there is a $v \in V_z$ and the unique $T^{\star}_i$-paths from $p$ to $u$ and from $p'$ to $v$ are disjoint.
The lines $\{(x,y) : y = 0\}$ and $\{(x,y) : y = h - z\}$ are connected twice in $T^{\star}_i$. Therefore we get
$l(B(P_i)) + h - z \le l(T^{\star}_i)$.

If $z=0$ we are done. Otherwise, if our statement is false there is an $0 < \epsilon \le z$ such that $\epsilon = l(B(P_i)) + h - l(T^{\star}_i)$ and a
$w \in V_{z-\frac{\epsilon}{2}}$.
Since the unique $T^{\star}_i$-paths from $p$ to $u$ and from $p'$ to $w$ are not disjoint,  $T^{\star}_i$ connects the lines  $\{(x,y) : x = 0\}$ and $\{(x,y) : x = z - \frac{\epsilon}{2}\}$  twice. 
Therefore we get the contradiction

\begin{equation*}
\epsilon = l(B(P_i)) + h - l(T^{\star}_i) \le l(B(P_i)) + h - l(B(P_i)) - h + z - z + \frac{\epsilon}{2} = \frac{\epsilon}{2}.
\end{equation*}

\begin{figure} [!t]
\centering 
\fbox{
\begin{tikzpicture}[line cap=round,line join=round,x=0.66cm,y=0.66cm]
\draw[<->,color=black] (-6,0) -- (6,0);
\foreach \x in {-5,-4,-3,-2,-1,,1,2,3,4,5}
\draw[shift={(\x,0)},color=black] (0pt,2pt) -- (0pt,-2pt) node[below] {\footnotesize $\x$};
\draw[<->,color=black] (0,-3) -- (0,3);
\foreach \y in {-2,-1,,1,2}
\draw[shift={(0,\y)},color=black] (2pt,0pt) -- (-2pt,0pt) node[left] {\footnotesize $\y$};
\clip(-6,-3) rectangle (6,3);
\draw[color=black] (-0.24,-0.23) node {\footnotesize $0$};
\draw [line width=1.2pt,color=blue] (-5,2)-- (5,2);
\draw [line width=1.2pt,color=blue] (-5,2)-- (-5,-2);
\draw [line width=1.2pt,color=blue] (-5,-2)-- (5,-2);
\draw [line width=1.2pt,color=blue] (5,-2)-- (5,2);
\draw[color=blue] (1.24,2.24) node {$B(P_i)$};
\draw [line width=1.2pt,color=darkGreen] (-1.5,0)-- (0,1.5);
\draw [line width=1.2pt,color=darkGreen] (1.5,0)-- (0,1.5);
\draw [line width=1.2pt,color=darkGreen] (1.5,0)-- (0,-1.5);
\draw [line width=1.2pt,color=darkGreen] (0,-1.5)-- (-1.5,0);
\fill [color=blue] (5,-1) circle (2.5pt);
\fill [color=blue] (-4,-2) circle (2.5pt);
\fill [color=blue] (4,2) circle (2.5pt);
\fill [color=blue] (-1,1.5) circle (2.5pt);
\draw[color=blue] (-1.24,1.73) node {$u$};
\fill [color=blue] (0.5,1.5) circle (2.5pt);
\draw[color=blue] (0.26,1.73) node {$v$};
\fill [color=blue] (2.5,1) circle (2.5pt);
\draw[color=blue] (2.26,1.23) node {$w$};
\fill [color=blue] (-2,-0.5) circle (2.5pt);
\draw[color=blue] (-2.24,-0.26) node {$p$};
\fill [color=blue] (-2,1) circle (2.5pt);
\fill [color=blue] (2,-1) circle (2.5pt);
\draw[color=blue] (2.24,-1.3) node {$p'$};
\fill [color=blue] (-5,-1) circle (2.5pt);
\fill [color=red] (0,0) circle (2.5pt);
\draw[color=red] (0.24,-0.24) node {$q_i$};
\draw [line width=1pt,color=blue,dash pattern=on 3pt off 3pt] (4,1)-- (0.5,1);
\draw [line width=1pt,color=blue,dash pattern=on 3pt off 3pt] (2,1)-- (2,-1);
\draw [line width=1pt,color=blue,dash pattern=on 3pt off 3pt] (2,-1)-- (5,-1);
\draw [line width=1pt,color=blue,dash pattern=on 3pt off 3pt] (0.5,1)-- (0.5,1.5);
\draw [line width=1pt,color=blue,dash pattern=on 3pt off 3pt] (-1,1.5)-- (0.5,1.5);
\draw [line width=1pt,color=blue,dash pattern=on 3pt off 3pt] (-2,1)-- (-1,1);
\draw [line width=1pt,color=blue,dash pattern=on 3pt off 3pt] (-2,-1)-- (-2,1);
\draw [line width=1pt,color=blue,dash pattern=on 3pt off 3pt] (-5,-1)-- (-2,-1);
\draw [line width=1pt,color=blue,dash pattern=on 3pt off 3pt] (-4,-1)-- (-4,-2);
\draw [line width=1pt,color=blue,dash pattern=on 3pt off 3pt] (4,1)-- (4,2);
\draw [line width=1pt,color=blue,dash pattern=on 3pt off 3pt] (-1,1)-- (-1,1.5);
\end{tikzpicture}
}
\caption{An example of the situation in the proof of Lemma~\ref{lemma1}. The green
diamond is the $l_1$-circle with radius $h$ around $q_i$.}
\label{fig:lemma1}
\end{figure}
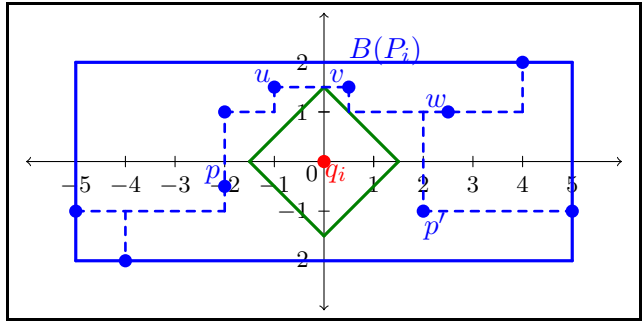
\end{proof}

\begin{lemma} \label{lemma2}
Let $i\in \{1,\dots,k\}$ and $T'_i, a_i$ be constructed as above.
If $B(P_i)$ is incomplete and $t_i < \beta t_i^{\max}$, then
$l(B(P_i)) + \lVert q_i - a_i \rVert_1 \le l(T^{\star}_i)$.
\end{lemma}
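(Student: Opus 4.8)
The plan is to transfer the argument of Lemma~\ref{lemma1} to the incomplete box, using the emptiness of the lower-left quadrant together with the staircase definition of $t_i^1$ in place of completeness. I would pass to the coordinate system centered at $q_i = (t_i,t_i)$ and set $h := \mnorm{q_i - a_i}$. Since $t_i < \beta t_i^{\max}$, we have $t_i = \min\{t_i^1,t_i^2\}$, so $q_i$ coincides with a staircase corner of $P_i$; assume w.l.o.g. $t_i = t_i^1 \le t_i^2$ (the case $t_i = t_i^2$ is symmetric). In the new coordinates this means that $P_i$ has no terminal with $x<0$ and $y<0$, and that this quadrant is inclusion-maximal with this property.

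First I would produce the anchor terminals playing the roles of $p'$, $u$, and $v_z$ from Lemma~\ref{lemma1}. Because $B(P_i)$ is the bounding box, $T^{\star}_i$ touches all four sides; combined with the empty lower-left quadrant this forces a terminal $u$ with $x<0$, $y\ge h$ (reaching the left side from above) and a terminal $p'$ with $x\ge h$, $y\le 0$ (reaching the bottom side from the right), while the definition of $a_i$ and the pull-in construction of $T'_i$ keep every terminal at $\ell_1$-distance at least $h$ from $q_i$. Exactly as in Lemma~\ref{lemma1}, this also yields, for every $z\in[0,h]$, a terminal $v_z$ in the region $V_z = \{(x,y)\in P_i : x\ge z,\ y\ge h-z\}$, and the maximality of $t_i^1$ anchors the staircase corner at the origin.

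Next I would run the dichotomy of Lemma~\ref{lemma1} verbatim: either the two relevant monotone $T^{\star}_i$-paths between these anchors intersect, in which case $T^{\star}_i$ joins a fixed pair of parallel lines at $\ell_1$-distance $h$ twice and the crossing-number (projection) bound for rectilinear trees gives $l(B(P_i)) + h \le l(T^{\star}_i)$; or the paths are disjoint, in which case choosing the minimal shift $z$ that makes them disjoint gives $l(B(P_i)) + h - z \le l(T^{\star}_i)$, and the remaining slack $z$ is removed by the same infimum-plus-$\epsilon$ contradiction that closes Lemma~\ref{lemma1}.

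The main obstacle is establishing the existence and exact placement of the anchor terminals for the \emph{incomplete} box. Whereas Lemma~\ref{lemma1} can exploit three of four occupied quadrants symmetrically around a centered $q_i$, here only three quadrants around an off-center $q_i$ are available, so I must combine the bounding-box property (each side is touched) with the empty-quadrant assumption to pin each anchor to the correct side of $q_i$. The hypothesis $t_i < \beta t_i^{\max}$ is precisely what is needed to guarantee that $q_i$ sits exactly at the staircase corner $t_i = t_i^1$; this alignment is what lets the anchor distances add up to the full length $h$, and it is exactly the situation that fails when the cap $\beta t_i^{\max}$ binds, which is therefore excluded from this lemma and treated separately.
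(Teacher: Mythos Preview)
Your overall strategy—reduce to the configuration of Lemma~\ref{lemma1} and rerun its dichotomy—is exactly the paper's approach, but there are two genuine gaps.

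First, the case $t_i = t_i^2 < t_i^1$ is not symmetric to $t_i = t_i^1$. The definitions of $t_i^1$ and $t_i^2$ are dual along the diagonal, but the ambient hypothesis (empty lower-left quadrant of the bounding-box centre) is not. When $t_i^2 \le t_i^1$, \emph{both} the lower-left and the upper-right quadrants of $q_i$ are empty, so every terminal lies in $\{x \le t_i^2,\, y \ge t_i^1\}$ or $\{x \ge t_i^1,\, y \le t_i^2\}$; the pull-in construction of $T'_i$ then forces $a_i = q_i$, and the inequality holds trivially with $h=0$. The paper disposes of this case in one line rather than by symmetry.

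Second, and more importantly, Lemma~\ref{lemma1}'s dichotomy compares the $T^{\star}_i$-paths $p \to u$ and $p' \to v_h$, which needs \emph{four} anchor terminals. You list only $u$, $p'$, and the family $v_z$; the terminal $p$ (in $\{x\le -h,\, y\le 0\}$ in Lemma~\ref{lemma1}) is missing, and since the lower-left quadrant of $q_i$ is empty here you cannot recover it from the bounding-box argument you give. Your sentence ``the maximality of $t_i^1$ anchors the staircase corner at the origin'' gestures at the fix but does not carry it out. The paper uses that maximality explicitly: it produces a terminal $p$ on the boundary of the empty region, i.e.\ with $p_y = t_i^1 = {q_i}_y$ and $p_x \le {q_i}_x$ (or the transposed case), so in $q_i$-coordinates $p$ sits on the negative $x$-axis and can serve as the fourth anchor. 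Together with a terminal $p'$ on the bottom edge of $B(P_i)$ (necessarily with $x \ge {q_i}_x$ since the lower-left is empty), the $T'_i$-path from $p$ to $p'$ must meet the left, upper, and right halflines through $q_i$, recreating the three-axis intersection that drives Lemma~\ref{lemma1}. Only after this step are the four anchors and the dichotomy available ``verbatim''.
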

\begin{proof}
If $t^2_i \le t^1_i$ all $(x,y) \in P_i$ are either $x \le t^2_i, y \ge t^1_i$ or $x \ge t^1_i, y \le t^2_i$.
By the construction of $T'_i$ we have $a_i = q_i$ and we are done.
Otherwise, there is a point $p = (p_x, p_y) \in P_i$ with either $p_x = t_i^1$ and $p_y \le t_i^1$ or $p_x \le t_i^1$ and $p_y = t_i^1$. 
W.l.o.g. $p_x \le t_i^1$ and $p_y = t_i^1$. Let be $p' \in P_i$ a point in the lower border of the bounding box. 
The $T'_i$-path from $p$ to $p'$ intersects the left, upper and right halfline starting in $q_i$.
We are now in the same setting as in the proof of Lemma~\ref{lemma1}. We obtain analougsly
$l(B(P_i)) + \lVert q_i - a_i \rVert_1 \le l(T^{\star}_i)$. 

\end{proof}

\begin{lemma} \label{lemma6}
If $B(P_i)$ is incomplete,  ${q^{\star}_i}_x <{q_i}_x$, and ${q^{\star}_i}_y <{q_i}_y$, then
$$\lVert q^{\star}_i - q_i\rVert_1 \le \frac{1}{2} l(T^{\star}_i) + \left(\frac{3}{2}\beta -\frac{1}{2}\right)  t_i^{\max}.$$
\end{lemma}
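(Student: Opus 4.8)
The plan is to distil the statement into a single geometric lower bound on $l(T^\star_i)$ and then finish with elementary estimates. Throughout I work in the coordinate system of $P_i$, so $q_i=(t_i,t_i)$; write $w$ for half the horizontal side length of $B(P_i)$, so that $w\ge t_i^{\max}$ and $l(B(P_i))=2w+2t_i^{\max}$. Because ${q^\star_i}_x<t_i$ and ${q^\star_i}_y<t_i$ we have
\[
\mnorm{q^\star_i-q_i}=(t_i-{q^\star_i}_x)+(t_i-{q^\star_i}_y),
\]
and since $q^\star_i\in B(P_i)$ we know ${q^\star_i}_x\ge-w$ and ${q^\star_i}_y\ge-t_i^{\max}$, while $t_i\le\beta t_i^{\max}$ by the definition of $t_i$.

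The single fact I would try to establish is
\[
l(T^\star_i)\ \ge\ l(B(P_i))+\min\{\,t_i-{q^\star_i}_x,\ t_i-{q^\star_i}_y\,\};
\]
call this inequality $(\ast)$. Granting $(\ast)$, the lemma follows by a one-line computation. Suppose the minimum is $t_i-{q^\star_i}_x$ (the case $t_i-{q^\star_i}_y$ is analogous). Halving $(\ast)$ gives $\tfrac12 l(T^\star_i)\ge (w+t_i^{\max})+\tfrac12(t_i-{q^\star_i}_x)$, so it suffices to verify $\tfrac12(t_i-{q^\star_i}_x)+(t_i-{q^\star_i}_y)\le w+(\tfrac12+\tfrac32\beta)t_i^{\max}$. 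Using ${q^\star_i}_y\ge-t_i^{\max}$ and $t_i\le\beta t_i^{\max}$ I bound $t_i-{q^\star_i}_y\le(1+\beta)t_i^{\max}$, and using ${q^\star_i}_x\ge-w$ I bound $\tfrac12(t_i-{q^\star_i}_x)\le\tfrac12(w+\beta t_i^{\max})$; the claim then collapses to $\tfrac12(w-t_i^{\max})\ge0$, i.e.\ to $w\ge t_i^{\max}$, which is exactly our standing assumption. The tight instance is $B(P_i)$ a square with $q^\star_i$ at the corner opposite to $q_i$, which is reassuring.

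The substance of the argument is $(\ast)$, and this is where I expect the real effort. Two structural facts drive it. First, $t_i\le t_i^1$ implies that the open south\nobreakdash-west region $\{x<t_i,\,y<t_i\}$ contains no terminal of $P_i$, so every terminal lies in $\mathcal R:=\{x\ge t_i\}\cup\{y\ge t_i\}$, and the $\ell_1$\nobreakdash-distance of $q^\star_i$ to $\mathcal R$ equals $\min\{t_i-{q^\star_i}_x,\,t_i-{q^\star_i}_y\}$. Second, in the regime in which the lemma is invoked we have $t_i=\beta t_i^{\max}$ (the complement of Lemma~\ref{lemma2}), so that $t_i\le t_i^2$ yields a terminal $N$ with both coordinates at least $t_i$; together with the terminals realising the four sides of $B(P_i)$ this forces $T^\star_i$ to span the whole bounding box and hence to carry length at least $l(B(P_i))$. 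I would then split $T^\star_i$ into a box\nobreakdash-spanning subnetwork and the portion reaching $q^\star_i$: let $\mathcal C$ be the component of $T^\star_i\cap\{x<t_i,\,y<t_i\}$ containing $q^\star_i$; it is terminal\nobreakdash-free apart from $q^\star_i$, meets the rest of $T^\star_i$ only across $\partial\mathcal R$, and therefore has length at least $\operatorname{dist}(q^\star_i,\mathcal R)$, which gives $(\ast)$ once $\mathcal C$ is shown edge\nobreakdash-disjoint from a box\nobreakdash-spanning part.

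The main obstacle is precisely this last disjointness: certifying that the detour into the terminal\nobreakdash-free south\nobreakdash-west region is genuinely extra and not already paid for by the spanning of $B(P_i)$. The extra length may appear horizontally or vertically depending on the topology of $T^\star_i$, which is exactly why $(\ast)$ carries a minimum rather than a single coordinate gap. I would settle it with a crossing\nobreakdash-number argument in the spirit of the proof of Lemma~\ref{lemma1}: for the coordinate attaining the minimum one shows that the corresponding thin strip between $q^\star_i$ and $\partial\mathcal R$ is traversed by $T^\star_i$ one extra time beyond the single traversal needed for spanning, the extra traversal being forced by the north\nobreakdash-east terminal $N$ together with a side terminal lying on the far side of the strip. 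Care is then needed to handle the degree of $q^\star_i$ in $T^\star_i$ and the exact placement of the extremal terminals, and to record that the complementary regime $t_i<\beta t_i^{\max}$ is the one covered by Lemma~\ref{lemma2}, so that assuming $t_i=\beta t_i^{\max}$ here is legitimate.
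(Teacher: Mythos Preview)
Your approach coincides with the paper's: both hinge on the inequality $l(T^\star_i)\ge l(B(P_i))+\min\{t_i-{q^\star_i}_x,\,t_i-{q^\star_i}_y\}$, obtained from a double-crossing argument forced by a terminal north\nobreakdash-east of $q_i$ (your $N$, the paper's $w$) together with boundary terminals (the paper's $u,v$), and then conclude by the same arithmetic, which the paper packages as $\tfrac12\min+\max$ rather than your case split. One small remark: you need not restrict to the regime $t_i=\beta t_i^{\max}$, since $t_i\le t_i^2$ already holds by the definition of $t_i$ as a minimum, so the terminal $N$ is available in every case the paper's argument covers.
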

\begin{proof}
Since $B(P_i)$ is incomplete,  $q^{\star}_i\in B(P_i)$, and by the choice of $t_i$ and $q_i$ in (\ref{defl}), there are points 
\[u \in \{(x,y) \in P_i :x \le {q^{\star}_i}_x, y \ge {q_i}_y\},\] 
\[v \in \{(x,y) \in P_i :x \ge {q_i}_x, y \le {q^{\star}_i}_y\},\]
\[w \in \{(x,y) \in P_i :x \ge {q_i}_x, y \ge {q_i}_y\}.\]

By the positions of the four points $q^{\star}_i$, $u$, $v$, and $w$, $T^{\star}_i$ connects either the lines $\{(x,y) : x = {q_i}_x\}$ and $\{(x,y) \in \Rset^2 : x = {q^{\star}_i}_x\}$ or the lines
$\{(x,y) \in \Rset^2: y = {q_i}_y\}$ and $\{(x,y) : y = {q^{\star}_i}_y\}$ twice. Therefore we have 
$\min \{|{q_i}_x - {q^{\star}_i}_x|, |{q_i}_y - {q^{\star}_i}_y|\} + l(B(P_i)) \le l(T^{\star}_i)$ and
\begin{align*}
\lVert q^{\star}_i - q_i\rVert_1 
& =  |{q^{\star}_i}_x - {q_i}_x| + |{q^{\star}_i}_y - {q_i}_y| \\
&\le \frac{1}{2} \left(l(T^{\star}_i) - l(B(P_i))\right) \hspace{-0.05cm}+\hspace{-0.05cm}\frac{1}{2} \min \{|{q_i}_x - {q^{\star}_i}_x|, |{q_i}_y - {q^{\star}_i}_y|\}\hspace{-0.05cm}+\hspace{-0.05cm} \max \{|{q_i}_x - {q^{\star}_i}_x|, |{q_i}_y - {q^{\star}_i}_y|\} \\
&\le \frac{1}{2} l(T^{\star}_i) - \frac{1}{2} l(B(P_i)) + \frac{1}{2} \left(t_i + t_i^{\max}\right) + t_i + \frac{1}{2} l_{hor}(B(P_i))\\
&\le \frac{1}{2} l(T^{\star}_i) + \frac {3}{2} t_i - \frac{1}{2} t_i^{\max} \le \frac{1}{2} l(T^{\star}_i) + \left(\frac{3}{2}\beta -\frac{1}{2}\right) t_i^{\max},
\end{align*}
where $l_{hor}(B(P_i))$ is the horizontal length of $B(P_i)$ and we first use $2 t^{\max}_i \le l_{hor}(B(P_i))$ to dissolve minimum and maximum, and then $l(B(P_i)) = 2 t^{\max}_i+ l_{hor}(B(P_i))$ and $t_i \le \beta t_i^{\max}$.
\end{proof}

\begin{lemma} \label{lemma7}
If $B(P_i)$ is incomplete and if $q^{\star}_i \in \{(x,y) \in B(P_i) : x \ge {q_i}_x \text{ or }  y \ge {q_i}_y\}$, then
$\lVert q^{\star}_i - q_i\rVert_1 \le \frac{1}{2} l(B(P_i))$.
\end{lemma}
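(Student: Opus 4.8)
The plan is to prove this purely geometrically, using only that both $q_i$ and $q^{\star}_i$ lie in $B(P_i)$ together with the hypothesis on $q^{\star}_i$; in particular, unlike Lemmas~\ref{lemma1}, \ref{lemma2}, and~\ref{lemma6}, it needs neither the structure of $T^{\star}_i$ nor the definitions (\ref{defl}) of $t_i^1, t_i^2$. First I would work in the coordinate system of $P_i$, so that $B(P_i)$ is the rectangle $[-\tfrac{L}{2}, \tfrac{L}{2}] \times [-t_i^{\max}, t_i^{\max}]$, where $L := l_{hor}(B(P_i))$ is the horizontal length and $2t_i^{\max}$ the vertical length. By our standing assumption $L \ge 2 t_i^{\max}$, hence $\tfrac{L}{2} \ge t_i^{\max}$, and by the choice of $t_i$ in (\ref{defl}) we have $0 \le t_i \le t_i^{\max}$. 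With these conventions $q_i = (t_i, t_i)$ and $l(B(P_i)) = L + 2 t_i^{\max}$, so the target bound reads $\lVert q^{\star}_i - q_i \rVert_1 \le \tfrac{L}{2} + t_i^{\max}$.

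Next I would split along the hypothesis, which places $q^{\star}_i$ in the L-shaped region $\{x \ge {q_i}_x\} \cup \{y \ge {q_i}_y\}$, into two symmetric cases. If ${q^{\star}_i}_x \ge {q_i}_x = t_i$, then ${q^{\star}_i}_x$ is pinned in $[t_i, \tfrac{L}{2}]$, giving $|{q^{\star}_i}_x - t_i| \le \tfrac{L}{2} - t_i$, while ${q^{\star}_i}_y$ ranges freely in $[-t_i^{\max}, t_i^{\max}]$, giving $|{q^{\star}_i}_y - t_i| \le t_i^{\max} + t_i$ since $t_i \ge 0$; adding these, the two copies of $t_i$ cancel and the sum is exactly $\tfrac{L}{2} + t_i^{\max}$. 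If instead ${q^{\star}_i}_y \ge {q_i}_y = t_i$, I would swap the roles of the axes: now ${q^{\star}_i}_y \in [t_i, t_i^{\max}]$ yields $|{q^{\star}_i}_y - t_i| \le t_i^{\max} - t_i$ and ${q^{\star}_i}_x \in [-\tfrac{L}{2}, \tfrac{L}{2}]$ yields $|{q^{\star}_i}_x - t_i| \le \tfrac{L}{2} + t_i$, and these again sum to $\tfrac{L}{2} + t_i^{\max}$. Since both cases give $\lVert q^{\star}_i - q_i \rVert_1 \le \tfrac{L}{2} + t_i^{\max} = \tfrac{1}{2} l(B(P_i))$, the claim follows.

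I do not expect a real obstacle here; the estimate decouples coordinate by coordinate, and the only point requiring care is the bookkeeping of the absolute values. The crux is to exploit $0 \le t_i \le t_i^{\max} \le \tfrac{L}{2}$, so that on whichever axis $q^{\star}_i$ is constrained by the hypothesis the amount $t_i$ saved there exactly offsets the $t_i$ that could be gained on the other (unconstrained) axis; this cancellation is precisely why pushing the connection point out to $(t_i, t_i)$ costs nothing against the half-diameter $\tfrac{1}{2} l(B(P_i))$ of the box in this regime, and it is what will let this lemma combine cleanly with Lemma~\ref{lemma6} in the final approximation bound.
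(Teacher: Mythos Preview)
Your proposal is correct and is essentially the same geometric argument as the paper's, whose entire proof is the single sentence ``Since $q_i \in \{(x,y)\in B(P_i) : x = y\}$ we have $\mnorm{q^{\star}_i - q_i} \le \frac{1}{2} l(B(P_i))$.'' You have simply made explicit the coordinate system, the case split on which half of the L-shaped region contains $q^{\star}_i$, and the $t_i$-cancellation that the paper leaves to the reader.
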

\begin{proof}
Since $q_i \in \{(x,y)\in B(P_i) : x = y\}$ we have $\mnorm{q^{\star}_i - q_i} \le \frac{1}{2} l(B(P_i))$.
\end{proof}

We are now able to bound the length of each subtree $T_i$ by the length of its corresponding tree $T^{\star}_i$
in the optimum solution.

\begin{lemma} \label{lemma8}
Let $1 \le \alpha \le 1.5$  be an approximation factor for the rectilinear Steiner tree problem. We define 
\begin{equation}
f(\alpha) = \min_{\beta \in [0,1]} \max \left\{ \frac{11}{8} \alpha + \frac{1}{4}, \frac{11}{8} \alpha + \frac{3}{8} \alpha \beta, \frac{3}{2} \alpha - \frac{1}{4} \beta + \frac{1}{4}\right\}.
\end{equation}
Then, for all $i = 1, \ldots, k$ we have $$\alpha \lVert q^{\star}_i - q_i\rVert_1 + l(T_i) \le f(\alpha) \cdot l(T^{\star}_i).$$

\end{lemma}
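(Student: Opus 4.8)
The plan is to prove the inequality separately for each fixed $i$, splitting on the shape of $B(P_i)$ and on the location of the optimal connection point $q^{\star}_i$ relative to our connection point $q_i$, and to bound the two summands $\alpha\mnorm{q^{\star}_i - q_i}$ and $l(T_i)$ independently against $l(T^{\star}_i)$. For the connection distance I would use the geometric estimates already prepared: Lemma~\ref{lemma6} when $q^{\star}_i$ lies strictly to the lower-left of $q_i$, Lemma~\ref{lemma7} when it does not, and the trivial bound $\mnorm{q^{\star}_i - q_i} \le \tfrac12 l(B(P_i))$ coming from $q_i$ being the box center when $B(P_i)$ is complete. For the subtree length I would use that the algorithm keeps the shorter of the two trees, so that simultaneously $l(T_i) \le \alpha\cdot\mathrm{opt}(P_i \cup \{q_i\})$ and $l(T_i) \le \alpha\cdot\mathrm{opt}(P_i) + \mnorm{q_i - a_i}$, and I would feed the pendant length $\mnorm{q_i - a_i}$ into Lemmas~\ref{lemma1} and~\ref{lemma2}. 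Throughout I would rely on the two structural facts $l(B(P_i)) \le l(T^{\star}_i)$ (the half-perimeter lower bound for rectilinear Steiner trees) and $t_i^{\max} \le \tfrac14 l(B(P_i))$ (since the box is at least as wide as it is tall).

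Next I would run the case distinction. If $B(P_i)$ is complete, the pendant length is controlled by Lemma~\ref{lemma1}, giving $l(T_i) \le \alpha\, l(T^{\star}_i)$, and combined with $\mnorm{q^{\star}_i - q_i} \le \tfrac12 l(B(P_i)) \le \tfrac12 l(T^{\star}_i)$ this yields $\tfrac32\alpha\, l(T^{\star}_i)$, dominated by the first term $\tfrac{11}{8}\alpha + \tfrac14$ whenever $\alpha \le 2$. If $B(P_i)$ is incomplete and $q^{\star}_i$ lies strictly to the lower-left of $q_i$, then the terminals surrounding $q_i$ again force $l(T_i) \le \alpha\, l(T^{\star}_i)$ (via Lemma~\ref{lemma1} or~\ref{lemma2}), and inserting the estimate of Lemma~\ref{lemma6} together with $t_i^{\max} \le \tfrac14 l(T^{\star}_i)$ turns the left-hand side into the second term $\tfrac{11}{8}\alpha + \tfrac38\alpha\beta$ (and into the first term when $\beta \le \tfrac13$, where the coefficient $\tfrac32\beta - \tfrac12$ is negative). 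The remaining case, where $q^{\star}_i$ does not lie strictly to the lower-left of $q_i$, is handled by Lemma~\ref{lemma7}; here the shift is typically capped at $t_i = \beta t_i^{\max}$, Lemma~\ref{lemma2} is unavailable, and the third term $\tfrac32\alpha - \tfrac14\beta + \tfrac14$ appears, the $-\tfrac14\beta$ reflecting the saving from the empty lower-left square of side $t_i = \beta t_i^{\max}$.

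Finally I would take the maximum of the per-case bounds, which is exactly the inner maximum defining $f(\alpha)$, and observe that $\beta$ is a free parameter of our own construction; minimizing over $\beta \in [0,1]$ gives the outer minimum and hence $f(\alpha)$.

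The hard part will be bounding $l(T_i)$ tightly: the naive estimate $\mathrm{opt}(P_i) \le l(T^{\star}_i)$ is far too lossy once it is added to the pendant length, since both would then be charged to $l(T^{\star}_i)$ at once. The purpose of Lemmas~\ref{lemma1} and~\ref{lemma2} is precisely to charge the pendant edge $\mnorm{q_i - a_i}$ against the slack $l(T^{\star}_i) - l(B(P_i))$ rather than on top of $\alpha\cdot\mathrm{opt}(P_i)$, and the whole argument hinges on switching to the direct bound $l(T_i) \le \alpha\cdot\mathrm{opt}(P_i \cup \{q_i\})$ in exactly the regimes where the refined estimate would be weak. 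The most delicate sub-case is the capped one $t_i = \beta t_i^{\max}$, where neither Lemma~\ref{lemma1} nor~\ref{lemma2} applies and one must exploit directly that no terminal of $P_i$ lies in the lower-left square of side $t_i$ in order to show that connecting $q_i$ is cheap; making the constants of this step line up with $\tfrac32\alpha - \tfrac14\beta + \tfrac14$ is the crux, after which the minimization over $\beta$ is routine.
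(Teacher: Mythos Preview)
Your overall plan---case split on the geometry of $B(P_i)$ and the position of $q^{\star}_i$, bound $\alpha\mnorm{q^{\star}_i-q_i}$ via Lemmas~\ref{lemma6}/\ref{lemma7} and $l(T_i)$ via the two candidate trees, then combine with $l(B(P_i))\le l(T^{\star}_i)$ and $t_i^{\max}\le\tfrac14 l(B(P_i))$---matches the paper. The problem is the step you treat as routine: the inequality $l(T_i)\le\alpha\,l(T^{\star}_i)$ that you invoke in both the complete case and the ``incomplete, lower-left'' case.

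In the complete case, Lemma~\ref{lemma1} tells you $\mnorm{q_i-a_i}\le l(T^{\star}_i)-l(B(P_i))$; it does \emph{not} give $\mathrm{opt}(P_i\cup\{q_i\})\le l(T^{\star}_i)$, and there is no reason it should (the minimum tree on $P_i$ need not pass near the box center). What the paper does instead is start from $l(T_i)\le l(T'_i)\le \alpha\,\mathrm{opt}(P_i)+\mnorm{q_i-a_i}\le \alpha\,l(T^{\star}_i)+\mnorm{q_i-a_i}$ and then \emph{also} use Lemma~\ref{lemma1} on the first summand $\tfrac12\alpha\,l(B(P_i))$, obtaining
\[
\tfrac12\alpha\,l(B(P_i))+l(T_i)\;\le\;\tfrac32\alpha\,l(T^{\star}_i)+(1-\tfrac12\alpha)\mnorm{q_i-a_i}\;\le\;\bigl(\tfrac{11}{8}\alpha+\tfrac14\bigr)l(T^{\star}_i),
\]
where the last step uses $\mnorm{q_i-a_i}\le\tfrac14 l(B(P_i))\le\tfrac14 l(T^{\star}_i)$ and $1-\tfrac12\alpha\ge 0$. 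So the pendant length is not discarded; it is traded against the $\tfrac12\alpha\,l(B(P_i))$ term.

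In the ``incomplete, lower-left'' case your claim $l(T_i)\le\alpha\,l(T^{\star}_i)$ is true, but not ``via Lemma~\ref{lemma1} or~\ref{lemma2}''. The paper proves it by a rerouting argument: by maximality of $T^{\star}_{top}$, $q^{\star}_i$ has exactly one upward and one rightward edge in $T^{\star}_i$ (no terminal lies lower-left of $q_i$), and rerouting the L through $q^{\star}_i$ to the L through $q_i$ gives a Steiner tree on $P_i\cup\{q_i\}$ of the \emph{same} length as $T^{\star}_i$, whence $\mathrm{opt}(P_i\cup\{q_i\})\le l(T^{\star}_i)$. This trick is the missing ingredient in your proposal. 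Your handling of the third case and the final $\min_\beta\max$ step are fine.
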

\begin{proof}
We perform a case distinction corresponding to the Lemmas~\ref{lemma1}, \ref{lemma6}, and \ref{lemma7}.

\noindent\textbf{First case:} $B(P_i)$ is complete (as in Lemma~\ref{lemma1}). 
\begin{align*}
\alpha \lVert q^{\star}_i - q_i\rVert_1 + l(T_i) & ~\le~ \frac{1}{2} \alpha \cdot l(B(P_i)) + l(T_i) ~\le~ \frac{3}{2} \alpha \cdot l(T^{\star}_i) + (1-\frac{1}{2}\alpha) \lVert q_i - a_i \rVert_1 \\
&~\le~ \frac{3}{2} \alpha \cdot l(T^{\star}_i) + \frac{1}{4} (1-\frac{1}{2}\alpha) l(B(P_i)) ~\le~ \left(\frac{11}{8} \alpha + \frac{1}{4} \right) l(T^{\star}_i). \\
\end{align*}
where the second inequality follows from Lemma~\ref{lemma1} and the
third inequality from the fact that the Steiner tree $T'_i$ insects
the vertical line trough the origin inside the bounding box.

\noindent\textbf{Second case:} $B(P_i)$ is incomplete,  ${q^{\star}_i}_x <{q_i}_x$, and ${q^{\star}_i}_y <{q_i}_y$ (as in Lemma~\ref{lemma6}).

By the maximality of $T^{\star}_{top}$,  $q^{\star}_i$ has at least two incident edges in $T^{\star}_i$.
Since there in no point in $P_i$ which is lower-left of $q_i$, we can assume w.l.o.g., that  $q^{\star}_i$ has exactly one incident edge in the upper and one in the right direction.
The first one intersects $\{(x,y) : y = {q_i}_y\}$ in $({q^{\star}_i}_x,{q_i}_y)$ and the second one intersects $\{(x,y) : x = {q_i}_x\}$ in $({q_i}_x,{q^{\star}_i}_y)$.

Now we reroute in $T^{\star}_i$ the path starting in $({q^{\star}_i}_x,{q_i}_y)$ via $q^{\star}_i$ to $({q_i}_x,{q^{\star}_i}_y)$ by a path via $q_i$ and get a Steiner tree $T'^{\star}_i$ on $P_i$ with $l(T'^{\star}_i) = l(T^{\star}_i)$.
Since $T'^{\star}_i$ is a Steiner tree on $P_i \cup \{q_i\}$ we get $l(T_i) \le \alpha \cdot l(T'^{\star}_i) = \alpha \cdot l(T^{\star}_i)$.

With Lemma~\ref{lemma6} (first inequality), and  $4t_i^{\max} \le l(B(P_i)) \le l(T^{\star}_i)$ we get
\begin{align*}
\alpha \lVert q^{\star}_i - q_i\rVert_1 + l(T_i)
& ~\le~ \alpha \left(\frac{1}{2} l(T^{\star}_i) + (\frac{3}{2} \beta - \frac{1}{2})t_i^{\max}\right) + \alpha \cdot l(T^{\star}_i)\\
&~\le~ \frac{3}{2} \alpha \cdot l(T^{\star}_i) + \alpha \left(\frac{3}{2} \beta - \frac{1}{2}\right) t_i^{\max} 
~\le~ \frac{3}{2} \alpha \cdot l(T^{\star}_i) + \alpha \left(\frac{3}{8} \beta - \frac{1}{8}\right) l(B(P_i))  \\
&~\le~ \left(\frac{11}{8} \alpha + \frac{3}{8} \alpha \beta\right) l(T^{\star}_i).
\end{align*}

\noindent\textbf{Third case:}  $B(P_i)$ is incomplete and  $q^{\star}_i \in \{(x,y) \in B(P_i) : x \ge {q_i}_x \text{ or }  y \ge {q_i}_y\}$ (as in Lemma~\ref{lemma7}).

If $t_i = \beta t_i^{\max}$, then 
\begin{align*}
\alpha \lVert q^{\star}_i - q_i\rVert_1 + l(T_i) 
& ~\le~ \frac{1}{2} \alpha \cdot l(B(P_i)) + l(T_i')
 ~\le~ \frac{1}{2} \alpha \cdot l(B(P_i)) + \alpha \cdot l(T^{\star}_i) + \left(1 - \beta\right) t_i^{\max} \\
& ~\le~ \left(\frac{3}{2} \alpha - \frac{1}{4} \beta + \frac{1}{4}\right) l(T^{\star}_i),
\end{align*}
where the first inequality follows by Lemma~\ref{lemma7}. 
In the second inequality we use that the extra cost of connecting $q_i$ in  $T'_i$ is bounded by the distance  $(1 - \beta) t_i^{\max} $ between $q_i$ and  the upper  boundary of $B(P_i)$.

If $t_i < \beta t_i^{\max}$, then
\begin{align*}
\alpha \lVert q^{\star}_i - q_i\rVert_1 + l(T_i)
& ~\le~ \frac{1}{2} \alpha \cdot l(B(P_i)) + l(T_i) 
~\le~ \frac{3}{2} \alpha \cdot l(T^{\star}_i) + (1-\frac{1}{2}\alpha) \lVert q_i - a_i \rVert_1 \\
&~\le~ \frac{3}{2} \alpha \cdot l(T^{\star}_i) + \frac{1}{4} (1-\frac{1}{2}\alpha) l(B(P_i)) 
~\le~ \left(\frac{11}{8} \alpha + \frac{1}{4} \right) l(T^{\star}_i). \\
\end{align*}
The first inequality follows by Lemma~\ref{lemma7}, the second by Lemma~\ref{lemma2}, and the third since $\mnorm{q_i-a_i} \le t_i^{\max} \le \frac{1}{4}l(B(P_i))$. 
\end{proof}

\begin{theorem}
\label{thm:adjusted-box-center}
The two-level rectilinear  Steiner tree problem can by approximated by a factor of 
$$f(\alpha) = \min_{\beta \in [0,1]} \max \left\{ \frac{11}{8} \alpha + \frac{1}{4}, \frac{11}{8} \alpha + \frac{3}{8} \alpha \beta, \frac{3}{2} \alpha - \frac{1}{4} \beta + \frac{1}{4}\right\}$$
using an $\alpha$-factor approximation algorithm for rectilinear Steiner trees as a subroutine. 
\end{theorem}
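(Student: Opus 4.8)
The plan is to combine the per-subtree estimate of Lemma~\ref{lemma8} with a single global bound on the top-level tree, exactly in the spirit of the proofs in Sections~\ref{sec:simple-bottom-up} and~\ref{sec:bounding-box-center}. First I would fix notation: let $T = (T_{top}, T_1, \ldots, T_k)$ be the tree produced by the algorithm, with the parameter $\beta$ chosen to attain the minimum in the definition of $f(\alpha)$, and let $T^{\star} = (T^{\star}_{top}, T^{\star}_1, \ldots, T^{\star}_k)$ be an optimum solution with $T^{\star}_{top}$ chosen maximal, so that connection points $q^{\star}_i \in T^{\star}_{top} \cap T^{\star}_i \subseteq B(P_i)$ exist for every $i$, matching the set-up already used for the earlier lemmas.

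Next I would bound the top-level tree. As before, the edge set $E(T^{\star}_{top}) \cup \left(\bigcup_{i=1}^k \{q^{\star}_i, q_i\}\right)$ covers a Steiner tree on $\{q_1, \ldots, q_k\}$, so since the algorithm computes $T_{top}$ with the $\alpha$-approximation subroutine,
\[
 l(T_{top}) \le \alpha\, l(T^{\star}_{top}) + \alpha \sum_{i=1}^k \lVert q^{\star}_i - q_i \rVert_1 .
\]

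Then I would assemble the pieces. Writing $l(T) = l(T_{top}) + \sum_{i=1}^k l(T_i)$ and inserting the bound above regroups the terms so that each subtree contributes precisely $\alpha \lVert q^{\star}_i - q_i \rVert_1 + l(T_i)$, which is the quantity estimated in Lemma~\ref{lemma8}. Applying that lemma termwise yields
\[
 l(T) \le \alpha\, l(T^{\star}_{top}) + f(\alpha) \sum_{i=1}^k l(T^{\star}_i).
\]
Since $f(\alpha) \ge \tfrac{11}{8}\alpha + \tfrac14 > \alpha$, the coefficient of the top-level term can be relaxed from $\alpha$ to $f(\alpha)$, and the right-hand side becomes $f(\alpha)\bigl(l(T^{\star}_{top}) + \sum_{i=1}^k l(T^{\star}_i)\bigr) = f(\alpha)\, l(T^{\star})$, which is the claim.

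I expect the genuine difficulty to reside entirely in the already-established lemmas rather than in this assembly step, which is essentially bookkeeping. The one point that needs care is the role of $\beta$: the bound of Lemma~\ref{lemma8} is stated for the value of $\beta$ fixed by the algorithm, and only by selecting $\beta$ to realize $\min_{\beta \in [0,1]}$ in $f(\alpha)$ do the three competing case-bounds balance to the stated factor. I would also verify the inequality $\alpha \le f(\alpha)$ used to absorb the top-level contribution, although this is immediate from the first term of the maximum.
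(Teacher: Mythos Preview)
Your proposal is correct and follows essentially the same route as the paper: bound $l(T_{top})$ via the covering argument, apply Lemma~\ref{lemma8} termwise, and absorb the top-level coefficient using $\alpha \le f(\alpha)$. If anything, you are slightly more explicit than the paper about the role of $\beta$ and about why $\alpha \le f(\alpha)$ holds in the final step.
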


\begin{proof}
Let $T$ be the two-level Steiner tree computeted using adjusted center points as connection points and let $T^{\star} = \{T^{\star}_{top}, T^{\star}_1, \ldots, T^{\star}_k\}$ be a minimum two-level Steiner tree with $T^{\star}_{top}$ 
as large as possible so that  all $q^{\star}_i\in B(P_i)$ for all $i=1,\dots,k$. Then
\begin{align*} 
l(T) & = l(T_{top}) + \sum_{i = 1}^{k} l(T_i) 
~\le~ \alpha \cdot l(T^{\star}_{top}) + \sum_{i = 1}^{k} \alpha \lVert q^{\star}_i - q_i\rVert_1 + \sum_{i = 1}^{k} l(T_i) \\ 
& ~\le~ \alpha \cdot l(T^{\star}_{top}) + f(\alpha) \sum_{i = 1}^{k} l(T^{\star}_i)
 ~\le~ f(\alpha) \cdot l(T^{\star}).
\end{align*}
The first inequality follows since $E\left(T^{\star}_{top}\right) \cup (\bigcup_{i = 1}^{k} \{q^{\star}_i,q_i\})$ 
covers a Steiner tree for  $\{q_1, \ldots, q_k\}$ and the third inequality follows by Lemma~\ref{lemma8}.
\end{proof}

We get an approximation ratio of $\frac{11}{8} \alpha + \frac{1}{4}$ when all bounding boxes are complete.

\begin{cor}
There is a $2.37$-factor approximation algorithm with runtime $\mathcal{O}(n \log n)$ for the two-level rectilinear  Steiner tree problem.
\end{cor}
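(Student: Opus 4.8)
The plan is to instantiate Theorem~\ref{thm:adjusted-box-center} with a rectilinear Steiner tree subroutine that runs in $\mathcal{O}(n\log n)$ time and then to minimise the resulting bound $f(\alpha)$ over the free parameter $\beta$. The natural fast subroutine is to return a minimum rectilinear spanning tree of the given terminal set in place of a Steiner tree. By Hwang's theorem the rectilinear Steiner ratio equals $\frac{3}{2}$, so this is an $\alpha$-approximation for rectilinear Steiner trees with $\alpha = \frac{3}{2}$; note that this value sits exactly at the upper end of the range $[1,\frac{3}{2}]$ for which Lemma~\ref{lemma8}, and hence Theorem~\ref{thm:adjusted-box-center}, is stated. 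A minimum rectilinear spanning tree on $m$ points can be computed in $\mathcal{O}(m\log m)$ time.

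Next I would evaluate $f\!\left(\frac{3}{2}\right)$ explicitly. With $\alpha = \frac{3}{2}$ the three terms inside the maximum become the constant $\frac{11}{8}\cdot\frac{3}{2}+\frac{1}{4} = \frac{37}{16}$, the increasing term $\frac{33}{16} + \frac{9}{16}\beta$, and the decreasing term $\frac{5}{2}-\frac{1}{4}\beta$. Since an increasing and a decreasing linear function are balanced at their intersection, the inner maximum of these two is minimised where they coincide, namely at $\beta = \frac{7}{13}$, where both equal $\frac{123}{52} \approx 2.365$. A quick check shows the constant term $\frac{37}{16} \approx 2.31$ lies strictly below this value and therefore never binds, so $f\!\left(\frac{3}{2}\right) = \frac{123}{52} < 2.37$, the claimed factor.

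It remains to confirm the $\mathcal{O}(n\log n)$ running time by accounting for every step. The connection points $q_i$, whether plain bounding-box centres or the adjusted versions built from $t_i^1, t_i^2, t_i$, are read off from the coordinates of each $P_i$ in time linear in $|P_i|$, hence $\mathcal{O}(n)$ overall. The spanning trees are computed on the disjoint sets $P_i\cup\{q_i\}$ and on $\{q_1,\dots,q_k\}$; because the $P_i$ partition $P$, the total number of points handled is $\mathcal{O}(n)$, so the per-partition costs telescope to $\mathcal{O}(n\log n)$ rather than $\mathcal{O}(kn\log n)$. The refinement producing $T'_i$ --- shifting maximal degree-two Steiner paths towards $q_i$ and attaching $q_i$ to a nearest point $a_i$ --- is local to each tree and runs in time linear in its size, so it does not affect the asymptotics.

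The genuine content of the corollary is already carried by Theorem~\ref{thm:adjusted-box-center}; here the only real choices are recognising that Hwang's $\frac{3}{2}$ Steiner ratio furnishes a fast constant-factor subroutine and performing the one-parameter minimisation over $\beta$. The step most worth care is the runtime bookkeeping: one must verify that the spanning-tree computations aggregate to $\mathcal{O}(n\log n)$, which holds precisely because the partition $P_1,\dots,P_k$ contributes each terminal to exactly one subproblem.
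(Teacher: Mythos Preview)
Your argument is correct and matches the paper's own proof: both plug the $\alpha=\frac{3}{2}$ spanning-tree heuristic into Theorem~\ref{thm:adjusted-box-center} and obtain $f(\frac{3}{2})=\frac{123}{52}<2.37$ at $\beta=\frac{7}{13}$. You supply more detail than the paper does---the explicit balancing of the three branches and the careful $\mathcal{O}(n\log n)$ bookkeeping over the partition---but the route is the same.
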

\begin{proof}
  A minimum rectilinear Steiner tree on $l$ terminals in the plane can be
  approximated by a factor $\alpha= 1.5$ computing  a minimum spanning tree in the Delaunay
  triangulation  in  $\mathcal{O}(l\log l)$ time \cite{smith}.
  Now we apply  Theorem~\ref{thm:adjusted-box-center}, where $f(\alpha) = \frac{123}{52} < 2.37$  is determined by $\beta = \frac{7}{13}$.
\end{proof}

\begin{cor}
There is an $1.63$-factor approximation algorithm for the two-level rectilinear Steiner tree problem.
\end{cor}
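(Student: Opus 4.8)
The plan is to invoke Theorem~\ref{thm:adjusted-box-center} with the sharpest available Steiner-tree subroutine, namely Arora's PTAS \cite{arora} in the plane, which for every fixed $\epsilon>0$ yields an $\alpha$-factor approximation with $\alpha = 1+\epsilon$ in polynomial time. The entire task then reduces to understanding the behaviour of $f(\alpha)$ as $\alpha\to 1$ and verifying that $f(1)$ lies strictly below $1.63$, so that a sufficiently small $\epsilon$ keeps $f(1+\epsilon)$ below $1.63$ as well.

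First I would evaluate $f$ at $\alpha=1$. The three expressions inside the maximum become the constant $\frac{11}{8}+\frac14=\frac{13}{8}$, the increasing term $\frac{11}{8}+\frac38\beta$, and the decreasing term $\frac74-\frac14\beta$. The latter two cross at $\beta=\frac35$, where both equal $\frac85=1.6$, which is below $\frac{13}{8}=1.625$. Since the constant first term appears in the maximum for every $\beta$, it is a lower bound on the maximum, and choosing any $\beta$ in the interval $[\tfrac12,\tfrac23]$ keeps the other two terms at most $\frac{13}{8}$; hence $f(1)=\frac{13}{8}=1.625<1.63$.

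Next I would argue that this conclusion is stable under a small increase of $\alpha$. Because $f$ is a minimum over the compact set $\beta\in[0,1]$ of a maximum of three functions that are linear, hence continuous, in $(\alpha,\beta)$, the function $f$ is continuous, so $f(1)<1.63$ already forces $f(1+\epsilon)<1.63$ for all sufficiently small $\epsilon$. To make this quantitative I would check that the first term $\frac{11}{8}\alpha+\frac14$ remains the binding constraint for $\alpha$ near $1$: the requirements that the second and third terms not exceed it read $\frac{\alpha}{2}\le\beta\le\frac{2}{3\alpha}$, an interval that is nonempty precisely for $\alpha\le\frac{2}{\sqrt3}$. Consequently $f(\alpha)=\frac{11}{8}\alpha+\frac14$ on $[1,\tfrac{2}{\sqrt3}]$, and the requirement $\frac{11}{8}(1+\epsilon)+\frac14<1.63$ amounts to $\epsilon<\frac{1}{275}$. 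Fixing, say, $\epsilon=\frac{1}{300}$ and running Arora's PTAS then yields, via Theorem~\ref{thm:adjusted-box-center}, a two-level rectilinear Steiner tree of length at most $f(1+\epsilon)\cdot l(T^{\star})<1.63\cdot l(T^{\star})$.

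The only delicate point is the razor-thin margin: $f(1)=1.625$ sits just $0.005$ below the claimed bound. The main obstacle is therefore not any single hard step but the need to confirm both that the constant term $\frac{11}{8}\alpha+\frac14$ is exactly the value of $f$ near $\alpha=1$, so that no other branch of the maximum spoils the bound, and that the admissible $\beta$-window $[\frac{\alpha}{2},\frac{2}{3\alpha}]$ stays nonempty for the chosen $\alpha=1+\epsilon$; both follow from the elementary linear inequalities above, but they must be checked carefully since there is essentially no slack.
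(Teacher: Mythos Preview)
Your proposal is correct and follows essentially the same route as the paper: invoke Theorem~\ref{thm:adjusted-box-center} with Arora's PTAS so that $\alpha=1+\epsilon$, and verify that $f(1+\epsilon)<1.63$ for small enough $\epsilon$. The paper simply fixes $\beta=\tfrac35$ and $\epsilon<0.003$ and leaves the arithmetic implicit, whereas you additionally determine the exact window $\beta\in[\tfrac{\alpha}{2},\tfrac{2}{3\alpha}]$ on which the first term $\tfrac{11}{8}\alpha+\tfrac14$ is binding and derive the sharp threshold $\epsilon<\tfrac{1}{275}$; this extra care is not needed but does no harm.
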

\begin{proof}
  We approximate the rectilinear Steiner trees  in Theorem~\ref{thm:adjusted-box-center} using Arora's approximation scheme \cite{arora}.
  Choosing $0< \epsilon < 0.003$, we get the claimed approximation factor, where $f(\alpha)$ is determined by $\beta := \frac{3}{5}$.
\end{proof}

\subsection{Small Top-Level Trees}

The adjusted box center algorithm computes the connection points based
on each partition individually ignoring the structure of the top-level
tree.  
If there is small box containing one terminal from each set $P_1,\dots,P_k$,
we can give a better approximation ratio independent of the structure of the partition sets.
We define the {\it top-level   bounding box} $B_{top}(P_1,\dots,P_k)$ as the smallest axis-parallel rectangle
containing at least one point $q_i \in P_i$ for each $i = 1, \ldots k$.
It is a simple exercise to see that $B_{top}(P_1,\dots,P_k)$  can be computed in $\mathcal{O}(n^3)$.

We choose connection points $q_i \in P_i \cap B_{top}(P_1,\dots,P_k)$ and compute the two-level Steiner tree
$(T_{top}, T_1, \ldots T_k)$ by an $\alpha$-factor approximation algorithm for rectilinear Steiner trees.
Let $(T_{top}^{\star}, T_1^{\star}, \ldots T_k^{\star})$ be an optimum solution.
In \cite{ulrich} it was shown that
$U(k) := \frac{\lceil\sqrt{k-2}\rceil}{2} + \frac{3}{4}$ is an  upper bound for the ratio of the length of a minimum 
rectilinear Steiner tree on $k$ terminals and their bounding box. Then
\begin{align*} 
l(T) & = l(T_{top}) + \sum_{i = 1}^{k} l(T_i) ~\le~ U(k) \cdot l(B_{top}(P_1,\dots,P_k)) + \alpha \sum_{i = 1}^{k} l(T^{\star}_i).\\ 
\end{align*}
Now if $l(B_{top}(P_1,\dots,P_k))$ is small, the approximation factor for the two-level Steiner tree 
is essentially  dominated by $\alpha$.

\small
\bibliographystyle{abbrv}

\end{document}